\theoremstyle{remark}
\newtheorem{theorem}{Theorem}
\newtheorem{proposition}{Proposition}
\newtheorem{corollary}{Corollary}
\newtheorem{lemma}{Lemma}
\newtheorem{definition}{Definition}
\newtheorem{example}{Example}
{\itshape}{\rmfamily}
\newtheorem{remark}{Remark}
\def\mcal{\mathcal}
\def\mbb{\mathbb}
\title{\LARGE \bf
Networked SIS Epidemics with Awareness
}
\author{Keith Paarporn$^{1}$, Ceyhun Eksin$^{1,2}$, Joshua S. Weitz$^{2,3}$ , Jeff S. Shamma$^{1,4}$ 
\thanks{$^{1}$School of Electrical and Computer Engineering, Georgia Institute of Technology, Atlanta, GA 30332
        {\tt\small kpaarporn@gatech.edu, ceyhuneksin@gatech.edu}}%
\thanks{$^{2}$School of Biology, Georgia Institute of Technology, Atlanta, GA}
\thanks{$^{3}$School of Physics, Georgia Institute of Technology, Atlanta, GA  {\tt\small jsweitz@gatech.edu}}%
\thanks{$^{4}$Computer, Electrical and Mathematical Sciences and Engineering, King Abdullah University of Science and Technology (KAUST) Thuwal, Kingdom of Saudi Arabia {\tt\small jeff.shamma@kaust.edu.sa}}%
}
\begin{document}

\maketitle
\thispagestyle{empty}
\pagestyle{empty}

\begin{abstract}

We study an SIS epidemic process over a static contact network where the nodes have partial information about the epidemic state. They react by limiting their interactions with their neighbors when they believe the epidemic is currently prevalent. A node's awareness is weighted by the fraction of infected neighbors in their social network, and a global broadcast of the fraction of infected nodes in the entire network. The dynamics of the benchmark (no awareness) and awareness models are described by discrete-time Markov chains, from which mean-field approximations (MFA) are derived. The states of the MFA are interpreted as the nodes' probabilities of being infected. We show a sufficient condition for existence of a ``metastable", or endemic, state of the awareness model coincides with that of the benchmark model. Furthermore, we use a coupling technique to give a full stochastic comparison analysis between the two chains, which serves as a probabilistic analogue to the MFA analysis. In particular, we show that adding awareness reduces the expectation of any epidemic metric on the space of sample paths, e.g. eradication time or total infections. We characterize the reduction in expectations in terms of the coupling distribution. In simulations, we evaluate the effect social distancing has on contact networks from different random graph families (geometric, Erd\H{o}s-Renyi, and scale-free random networks).
\end{abstract}
\section{Introduction}\label{sec:intro}

Mathematical models of epidemic spreading over networks have been extensively studied. Models characterize how the spatial features induced by network structure affects epidemic spread (\cite{Chak2003},\cite{Ganesh2005},\cite{scalefree_threshold},\cite{Newman},\cite{VM2009},\cite{Hassibi2013},\cite{Volz_2011}). The simplest formulation for such processes is the susceptible-infected-susceptible (SIS) model, where an individual is either infected or susceptible to infection. In such models, there is a threshold determining whether the epidemic eradicates quickly or persists for a long time. Specifically, $\delta/\beta > \lambda_{\text{max}}(A)$ ($\beta$ is the disease transmission rate, $\delta$ the healing rate, and $\lambda_{\text{max}}(A)$ the largest eigenvalue of the network adjacency matrix) is a sufficient condition for the disease to eradicate exponentially fast. The opposite strict inequality is a necessary condition for the disease to persist for a long period of time. The steady state in this regime is often referred to as the endemic or metastable state. How to devise control strategies in this regime is both an important research and policy question. 



A commonly studied control strategy is budgeted vaccine allocation, where the administration of vaccines among central nodes in the network optimally inhibits the epidemic (\cite{socialcost},\cite{Drakopoulos_2014},\cite{Preciado_2014}). In these situations, a central authority selects individuals to vaccinate and hence is required to have knowledge of the network structure. In game-theoretic settings, individuals decide for themselves whether or not to vaccinate based on an assessment of risks and benefits (\cite{VM_game},\cite{VM2015},\cite{mbah_2012},\cite{Molina2014},\cite{Bauch2004}). However, these models do not account for social behavior during the course of an epidemic, which can significantly slow epidemic spread without the aid of vaccines.

With the widespread availability of social media and news outlets on the internet and television, individuals may be well-informed about the current state of ongoing epidemics and how to take precautionary measures to avoid getting sick. In the recent 2009 H1N1 Influenza pandemic, people responded to public service announcements by increasing the frequency of washing hands, staying at home when they or loved ones were sick, or avoiding large public gatherings \cite{Steelfisher_2010}. In the recent Ebola outbreak in West Africa, a combination of quarantining and sanitary burial methods were shown to significantly reduce the rate of virus spread \cite{Pandey_2014}. These precautions and social distancing actions effectively limit epidemic spread. Individuals' distancing actions depend on the extent of how informed they are. The dissemination and exchange of information influences the public's behavior, affecting the course of the epidemic itself, in turn affecting the public's behavior again \cite{Bauch_2013}. This feedback loop allows epidemic spreading to coevolve with human social behavior, inducing complex dynamics. 


Recent research effort has focused on understanding the complexities that arise when incorporating human behavioral elements into existing models of epidemic spreading. A review of the recent literature can be found in \cite{Wang_2015_review}. Such models present general challenges for characterizing decentralized and dynamic protection measures and also capture a realistic aspect of disease spread in society. When individuals take social distancing actions based on the level of information they have, they reduce contact with others and the epidemic prevalence reduces significantly (\cite{Funk2009},\cite{Reluga_2010},\cite{Perra_2011}).  They can become aware of the epidemic by communicating with their social contacts or by a global broadcast (\cite{Arenas_2013},\cite{Arenas_2014}). Other actions include switching one's contact links, giving rise to a coevolving network \cite{Ogura_2015}.  Endowing individuals with local prevalence-based awareness highlights the role of network effects (\cite{Zhang_2014},\cite{Wu_2012}).

In this work, we study a networked SIS process with dynamically distributed information and social distancing actions. The information the agents receive comes from their social contacts and a global broadcast about the current state of the epidemic. An agent's social distancing action reduces its contact network interactions, the magnitude of which depends on how informed it is. We prove awareness reduces the endemic level, but cannot improve the epidemic threshold for persistence. In addition, we provide a stochastic comparison analysis between the awareness and benchmark (without awareness) processes using a coupling technique. This establishes an inequality between expectations of certain epidemic metrics (e.g. eradication time, cumulative infected), as well as the closed-form difference. We are also interested in studying which combinations of network structure and awareness are most effective. The results extend prior work by the same authors in \cite{Paarporn2015},\cite{Paarporn_2016}, where the mean-field approximation and coupling technique were initially studied.

The paper is organized as follows. In Section \ref{sec:models}, we describe a networked SIS epidemic process in discrete time, which we modify by incorporating a dynamic form of agent awareness and social distancing. In Section \ref{sec:MFA}, we introduce mean-field approximations on the probabilities of infection and prove the epidemic threshold for persistence with distancing remains the same as without. Section \ref{sec:coupling} provides a stochastic comparison analysis between the benchmark and awareness Markov chains through a coupling technique. In Section \ref{sec:sims}, we explore through simulations which random graph families are effective at restricting epidemic spread and prevalence when social distancing is a factor. Section \ref{sec:conclusion} gives concluding remarks. Proofs of some of the results are given in the Appendix.

\noindent{\bf Notation:} $\mbb{Z}_+ = \mbb{N}\cup\{0\}$ is the set of nonnegative integers. The all zeros and all ones vector in $\mbb{R}^n$ is written $\bold{0}_n$ and $\bold{1}_n$, respectively. For $x,y\in\mbb{R}^n$ we write $x\preceq y$  if $x_i \leq y_i$ for all $i$, and $x \prec y$ if $x_i < y_i$. To isolate a particular coordinate $i$, we write $x = (x_{-i},x_i) \in \mbb{R}^n$, where $x_{-i} = \{x_j : j \neq i\} \in \mbb{R}^{n-1}$. $\mcal{P}(E)$ is the power set of some set $E$. The complement of the set $E$ is written $E^c$. In probabilistic settings, we write $\chi_E(\cdot)$ as the indicator on the event $E$, i.e. $\chi_E(x) = 1$ if $x\in E$ and zero otherwise.

\section{Networked SIS Models}\label{sec:models}

\subsection{Benchmark SIS Model}

We introduce a model of epidemic spread which we refer to as the benchmark model (studied in \cite{Chak2003}, \cite{Hassibi2014}, and Section 5 of \cite{Hassibi2013}). Consider the set of nodes $\mcal{N}=\{1,\ldots, n\}$ interconnected by a set of edges $\mcal{E}_C$. Epidemic spread occurs in discrete time steps $t=0,1,\ldots$ over the undirected graph $\mcal{G}_C = (\mcal{N},\mcal{E})$, whose $n\times n$ adjacency matrix is defined for any $i,j\in\mcal{N}$, as  $[A_C]_{ij} = 1$ if $(i,j) \in \mcal{E}_C$ and $0$ otherwise. The graph $\mcal{G}_C$ is called the contact network. An agent $i\in\mcal{N}$ is either susceptible to the disease or infected by it. The epidemic states are defined as $\Omega \triangleq \{0,1\}^n$. For any $s\in\Omega$ and $i\in\mcal{N}$, either $s_i=0$, meaning agent $i$ is susceptible, or $s_i=1$, meaning it is infected. A susceptible node $i$ can contract the disease from neighboring agents in the contact network, $\mcal{N}_i^C \triangleq \{j\in\mcal{N} : (i,j) \in \mcal{E}_C\}$. When agent $i$ is susceptible in the epidemic state is $s\in\Omega$ $(s_i = 0)$, its probability of getting infected in the next time step due to an interaction with its neighbor $j\in\mcal{N}_i^C$ is given by $\beta s_j$ where $\beta \in (0,1)$ is the transmission probability of the disease. Hence, an individual can only contract the disease from an infected neighbor. Agent $i$ interacts with each of its neighbors independently. Therefore, $i$'s probability of not becoming infected in the next time step is
\begin{equation}
	p_{00}^i(s) \triangleq 1-p_{01}^i(s).
\end{equation}
Consequently, the probability $i$ becomes infected is 
\begin{equation}
	\label{eq:p01}
	p_{01}^i(s) \triangleq 1 - \prod_{j\in\mcal{N}_i^C} (1 - \beta s_j)
\end{equation}

 If $i$ is infected in state $s$ ($s_i=1$), it becomes susceptible in the next time step with probability $\delta p_{00}^i(s)$, where $\delta\in (0,1)$ is the healing probability. Thus, for an infected node to become susceptible, it must heal and not get re-infected by its neighbors. Agent $i$'s transition probabilities are summarized in Figure \ref{fig:nodemodeldiagram}, and described by $\mbb{P}_i : \Omega\times\{0,1\} \rightarrow [0,1]$ defined as
\begin{align}
	\text{If} \ s_i &= 0, \begin{cases} \mbb{P}_i(s,0) &= p_{00}^i(s) \\ \mbb{P}_i(s,1) &= p_{01}^i(s) \end{cases} \label{eq:bench_measure} \\
	\text{If} \ s_i &= 1, \begin{cases} \mbb{P}_i(s,0) &= \delta p_{00}^i(s) \\ \mbb{P}_i(s,1) &= 1-\delta p_{00}^i(s) \end{cases} \label{eq:bench_measure2}
\end{align}
For each $i\in\mcal{N}$ and $s\in\Omega$, the $\mbb{P}_i$ define the benchmark SIS Markov chain over $\Omega$ by the $2^n \times 2^n$ transition matrix $K$ with elements 
\begin{equation}
	\label{eq:bench_matrix}
	K(s,s') \triangleq \prod_{i=1}^n \mbb{P}_i(s,s_i'), \ \forall s,s' \in \Omega
\end{equation}
This chain has one absorbing state, the all-susceptible state $\bold{o} \triangleq \{0\}^n$. 
\begin{figure}[t]
		
		\begin{subfigure}[t]{\columnwidth}
			\centering
			\includegraphics{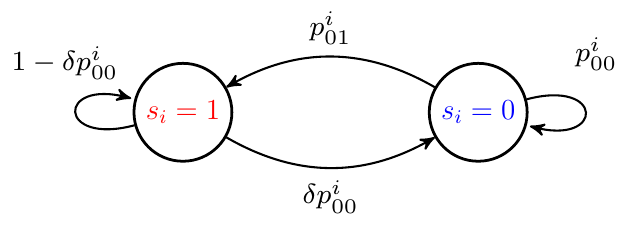}
	 		\caption{\small }
			\label{fig:nodemodeldiagram}
		\end{subfigure}
		
		\begin{subfigure}[t]{\columnwidth}
			\centering
			\includegraphics{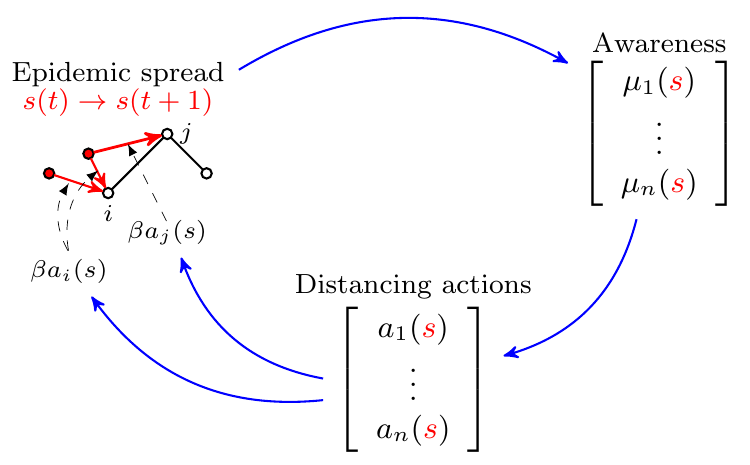}
	 		\caption{\small }
			\label{fig:systemdiagram}
		\end{subfigure}
		\caption{\small (a) Node-level state transition diagram. (b) System-level diagram }
\end{figure}
\subsection{Awareness SIS Model}

We modify the benchmark model to take into account the agents' awareness of the current epidemic state. The information agent $i$ receives comes from two sources: the proportion of infected neighbors in its local social network and a global broadcast of the proportion of infected nodes in the entire network. The social network is a graph $\mcal{G}_I = (\mcal{N},\mcal{E}_I)$ with the same nodes as $\mcal{G}_C$ but with different edges, representing the nodes' social communication links. The set of $i$'s neighbors in $\mcal{G}_I$ is written $\mcal{N}_i^I$. The information is given by
\begin{equation} \label{eq:mu}
	\mu_i(s) \triangleq \frac{\alpha}{|\mcal{N}_i^I|}\sum_{j\in\mcal{N}_i^I} s_j + \frac{1-\alpha}{n}\sum_{j=1}^n s_j \ ,\forall s \in \Omega
\end{equation}
where $\alpha\in[0,1]$ is a parameter that governs the trust nodes place in information from their social contacts. Consequently, node $i$ reduces its interactions with its physical  neighbors through the social distancing action
\begin{equation}
	\label{eq:a_i}
	a_i(s) \triangleq 1 - \mu_i(s),
\end{equation}
which reduces its susceptible-to-infected probability \eqref{eq:p01} to
\begin{equation}
	\label{eq:p01d}
	p_{01,d}^i(s) \triangleq 1 - \prod_{j\in\mcal{N}_i^C} (1 - \beta a_i(s) s_j).
\end{equation}
We similarly define $p_{00,d}^i(s) \triangleq 1 - p_{01,d}^i(s)$. An infected agent's probability of recovering becomes $\delta p_{00,d}^i(s)$. Note for all $s\in\Omega$, $p_{01,d}^i(s) \leq p_{01}^i(s)$. Combined with the social distancing behaviors $a_i$, the local awareness spread dynamics in \eqref{eq:mu} make the effect of user behavior on its infection probability endogenous to the benchmark chain model through a negative feedback loop. We define the $\mbb{P}_i^d$ analogously to \eqref{eq:bench_measure} and \eqref{eq:bench_measure2}:
\begin{align}
	\text{If} \ s_i &= 0, \begin{cases} \mbb{P}_i^d(s,0) &= p_{00,d}^i(s) \\ \mbb{P}_i^d(s,1) &= p_{01,d}^i(s) \end{cases} \label{eq:distancing_measure} \\
	\text{If} \ s_i &= 1, \begin{cases} \mbb{P}_i^d(s,0) &= \delta p_{00,d}^i(s) \\ \mbb{P}_i^d(s,1) &= 1-\delta p_{00,d}^i(s) \end{cases} \label{eq:distancing_measure2}
\end{align}
Thus, the $\mbb{P}_i^d$ define the distancing Markov chain over $\Omega$ by the transition matrix $K_d$ with elements 
\begin{equation}
	\label{eq:distancing_matrix}
	K_d(s,s') \triangleq \prod_{i=1}^n \mbb{P}_i^d(s,s_i'), \ \forall s,s' \in \Omega
\end{equation}
whose unique absorbing state is also  $\bold{o}$, the all-susceptible state. The feedback between social distancing actions and epidemic states is illustrated in Figure \ref{fig:systemdiagram}.
\begin{remark}
Our model of awareness captures the different ways an agent may receive information about an ongoing epidemic from the media. Large media corporations and public health institutions such as the Centers for Disease Control and Prevention (CDC) and the World Health Organization (WHO) often report an estimated total number of people infected nationwide or globally at a given time, and this information is disseminated amongst the population. Information is also exchanged through one's personalized social links, which can range beyond a person's geographic location. Thus, a node's awareness is composed of a linear combination of both sources of information, as given in \eqref{eq:mu}.
\end{remark}

\section{Mean-field Approximations}\label{sec:MFA}

\subsection{Derivation}
We derive a mean-field approximation (MFA) of the Markovian dynamics described in the previous section. The MFA is a deterministic, discrete-time dynamic system with an $n$-dimensional state space  $[0,1]^n$, which is interpreted to be each node's probability of being infected at any given time. 

Here, we write $s^t \in \Omega$ as the epidemic state at time $t = 0,1,\ldots$. Indeed, consider the node-level stochastic state transition update
\begin{equation}
	\label{eq:stoch_update}
	s_i^{t+1} = s_i^{t}B(1-\delta p_{00,d}^i(s^t))+ (1-s_i^{t})B(p_{01,d}^i(s^t))
\end{equation}
of the distancing chain where $B(\lambda)$ denotes a Bernoulli random variable with parameter $\lambda\in [0,1]$. For shorthand, we write $x_i^t \triangleq \text{Pr}(s_i^t = 1)$ for the probability node $i$ is infected at time $t$. Taking the probability of both sides equaling one, 
\begin{align}
	\label{eq:LTP}
	x_i^{t+1} &= \text{Pr}(B(1-\delta p_{00,d}^i(s^t)) = 1 | s_i^t = 1) \text{Pr}(s_i^t=1) + \cdots \nonumber \\
	& \ \ \ \ \text{Pr}(B(p_{01,d}^i(s^t))=1 | s_i^t=0)\text{Pr}(s_i^t=0) \nonumber \\
	&= x_i^t(1-\delta p_{00,d}^i(s^t)) + (1-x_i^t)p_{01,d}^i(s^t) \nonumber \\
	&= x_i^t(1-\delta) + (1-(1-\delta)x_i^t)p_{01,d}^i(s^t) 
\end{align}
Note the expression for $x_i^{t+1}$ still depends stochastically on the state $s^t$.  To obtain a mean-field approximation of $x_i^{t+1}$, we simply replace the state $s^t$ with $x^t$ (the $n$-vector with components $x_i^t$) in \eqref{eq:LTP}. Thus, by redefining $x_i^{t+1}$ to obey this approximation and extending the domain of $\mu_i(\cdot), a_i(\cdot)$ and $p_{01,d}^i(\cdot)$ from $\{0,1\}^n$ to $[0,1]^n$, we have the following approximate dynamics 
\begin{equation}
	\label{eq:MFA_dist}
	x_i^{t+1} = x_i^t(1-\delta) + (1-(1-\delta)x_i^t) p_{01,d}^i(x^t)
\end{equation}
on the time evolution of node $i$'s probability of being infected. Stacking the dynamics for each node into a vector, we obtain a mapping $\phi : [0,1]^n \rightarrow [0,1]^n$ where $\phi_i(x) = x_i(1-\delta) + (1-(1-\delta)x_i)p_{01,d}^i(x)$ and $x^{t+1} = \phi(x^t)$. This MFA of the distancing chain is in contrast to the MFA of the benchmark chain, 
\begin{equation}
	\label{eq:MFA_standard}
	x_i^{t+1} = x_i^t(1-\delta) + (1-(1-\delta)x_i^t) p_{01}^i(x^t)
\end{equation}
which is studied thoroughly in \cite{Hassibi2013}. It is derived in the same manner, and is described by the mapping $\psi : [0,1]^n \rightarrow [0,1]^n$ with $\psi_i(x) = x_i(1-\delta) + (1-(1-\delta)x_i)p_{01}^i(x)$ and $x^{t+1} = \psi(x^t)$.

We make a few remarks on the basic structure of these two mappings. They are nonlinear, continuous mappings satisfying $\phi(x) \prec \psi(x)$ whenever $x\in [0,1]^n\backslash\bold{0}_n$ and $\alpha\in [0,1)$. Also, $\phi(\bold{0}_n) = \psi(\bold{0}_n) = \bold{0}_n$. Linearization of $\phi$ and $\psi$ about the origin yields the same Jacobian matrix, $\beta A_C + (1-\delta)I$, and hence the same linearized dynamics $x^{t+1} = (\beta A_C + (1-\delta)I)x^t$. The linear dynamics serve as an upper bound to both \eqref{eq:MFA_dist} and \eqref{eq:MFA_standard}. Therefore, if $\lambda_{\text{max}}(\beta A_C + (1-\delta)I) < 1$, the origin is a globally stable fixed point and it is an unstable fixed point if $\lambda_{\text{max}}(\beta A_C + (1-\delta)I) > 1$.

\subsection{Existence of a Non-trivial Fixed Point}
We now provide a sufficient condition for the existence of a non-trivial fixed point ($\succ \bold{0}_n$, the $n$-vector of zeros) of $\phi$.  
\begin{theorem}\label{theorem_nontrivial_fixed_point}  If $\lambda_{\text{max}}(\beta A_C + (1-\delta)I_n) > 1$, there exists a nontrivial fixed point for $\phi$. 
\end{theorem}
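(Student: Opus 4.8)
The plan is to exhibit a nonempty, compact, convex subset of $[0,1]^n$ that excludes the origin and is mapped into itself by $\phi$, and then invoke Brouwer's fixed point theorem. The natural region is a half-space slice of the cube aligned with the Perron direction of the linearization $M := \beta A_C + (1-\delta)I_n$ at the origin. Taking $\mcal{G}_C$ connected (the general case reduces to a supercritical component), $M$ is nonnegative, symmetric, and irreducible, so Perron--Frobenius furnishes $v \succ \bold{0}_n$ with $Mv = \lambda v$, where $\lambda := \lambda_{\text{max}}(M) > 1$ by hypothesis. For $c>0$ put $R_c := \{ x \in [0,1]^n : \langle v, x\rangle \geq c \}$, which is compact, convex, and nonempty for small $c$. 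Any fixed point Brouwer produces in $R_c$ is $\neq \bold{0}_n$ because $v \succ \bold{0}_n$, and a short bootstrap through the fixed-point equation upgrades it to $\succ \bold{0}_n$.

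The heart of the proof is to verify $\phi(R_c) \subseteq R_c$ for a suitable $c$, i.e.\ to lower-bound $\langle v, \phi(x)\rangle$ on $R_c$. I would combine two global estimates. First, \eqref{eq:MFA_dist} together with $p_{01,d}^i(x) \geq 0$ and $1 - (1-\delta)x_i \geq 0$ gives the crude recovery bound $\phi_i(x) \geq (1-\delta)x_i$, hence $\langle v, \phi(x)\rangle \geq (1-\delta)\langle v, x\rangle$ for every $x \in [0,1]^n$. Second, each $\phi_i$ is a polynomial, so $\phi$ is smooth on $[0,1]^n$ with $\phi(\bold{0}_n) = \bold{0}_n$ and, as noted in the text, $D\phi(\bold{0}_n) = \beta A_C + (1-\delta)I_n = M$; Taylor's theorem with the uniform Hessian bound over the cube yields $\phi_i(x) = (Mx)_i + R_i(x)$ with $|R_i(x)| \leq C\|x\|_1^2$, and then $Mv = \lambda v$, symmetry of $M$, and $\|x\|_1 \leq \langle v, x\rangle / \min_i v_i$ (valid since $x \succeq \bold{0}_n$) give $\langle v, \phi(x)\rangle \geq \lambda\langle v, x\rangle - C'\langle v, x\rangle^2$ for every $x \in [0,1]^n$. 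Writing $t = \langle v, x\rangle$: on the small-state range $t \leq (\lambda-1)/(2C')$ the second bound forces $\langle v, \phi(x)\rangle \geq \tfrac{\lambda+1}{2}t \geq t$, while for all $t$ the first bound gives $\langle v, \phi(x)\rangle \geq (1-\delta)t$; choosing $c \in \bigl(0, (1-\delta)(\lambda-1)/(2C')\bigr]$ one then checks $\langle v, \phi(x)\rangle \geq c$ whenever $t \geq c$ by splitting at $t = (\lambda-1)/(2C')$. Since $\phi$ also maps $[0,1]^n$ into itself, $\phi(R_c) \subseteq R_c$, and Brouwer gives a fixed point $x^* \in R_c$.

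It remains to show $x^* \succ \bold{0}_n$. Suppose $x_i^* = 0$ for some $i$; then \eqref{eq:MFA_dist} collapses to $0 = p_{01,d}^i(x^*) = 1 - \prod_{j \in \mcal{N}_i^C}(1 - \beta a_i(x^*)x_j^*)$, which forces $a_i(x^*)x_j^* = 0$ for every $j \in \mcal{N}_i^C$. But $x_i^* = 0$ and $\alpha < 1$ give $\mu_i(x^*) \leq 1 - \tfrac{1-\alpha}{n} < 1$, so $a_i(x^*) > 0$ and hence $x_j^* = 0$ for all $j \in \mcal{N}_i^C$; propagating along paths of the connected graph $\mcal{G}_C$ yields $x^* = \bold{0}_n$, a contradiction. (The endpoint $\alpha = 1$ is handled identically using connectedness of $\mcal{G}_I$.)

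I expect the main obstacle to be the construction of the invariant region. Unlike the benchmark map $\psi$, the awareness map $\phi$ is not monotone --- raising a coordinate $x_k$ with $k \notin \mcal{N}_i^C$ increases $\mu_i$ and therefore decreases $\phi_i$ --- so the usual device of producing the endemic state by iterating $\phi$ upward from a small multiple of $v$ is not available, and there is no small forward-invariant box around the unstable origin. The half-space slice $R_c$ circumvents this, but pinning it down requires matching the linearization growth estimate valid near $\bold{0}_n$ against the recovery bound $\phi_i(x) \geq (1-\delta)x_i$ valid far from it, and choosing a single constant $c$ consistent with both regimes; that calibration, together with the connectivity-dependent promotion from $x^* \neq \bold{0}_n$ to $x^* \succ \bold{0}_n$, is where the genuine work lies.
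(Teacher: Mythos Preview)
Your argument is correct and follows a genuinely different route from the paper's. The paper applies a Miranda-type variant of Brouwer (Lemma~\ref{Brouwer}) on a coordinate box $D_n = \prod_i [\varepsilon\nu_i, u]$: the upper-face condition $f_i(x_{-i}, u) < 0$ is easy, but the lower-face condition $f_i(x_{-i}, \varepsilon\nu_i) \geq 0$ requires a separate technical lemma (Lemma~\ref{cstar}) tracking the unique root $c_i^*(x_{-i})$ of $f_i(x_{-i}, \cdot)$ and a delicate choice of $\varepsilon$ so that $\varepsilon\nu_i \leq c_i^*(x_{-i})$ across the entire face. You instead take the Perron direction of $M$ and trap $\phi$ in the half-space slice $R_c = \{x \in [0,1]^n : \langle v, x\rangle \geq c\}$, controlling the scalar $\langle v, \phi(x)\rangle$ by matching a quadratic Taylor estimate near $\bold{0}_n$ against the crude recovery bound $\phi(x) \succeq (1-\delta)x$ away from it. Your approach is more elementary---no root-tracking lemma, just Taylor and standard Brouwer---and the Lyapunov functional $\langle v,\cdot\rangle$ makes the role of the spectral hypothesis $\lambda>1$ transparent; the price is the separate bootstrap from $x^* \neq \bold{0}_n$ to $x^* \succ \bold{0}_n$, which the paper's box gives for free. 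One small correction to that bootstrap: your parenthetical fix for $\alpha = 1$ via ``connectedness of $\mcal{G}_I$'' does not work as stated. The clean observation is that no fixed point of $\phi$ can have a coordinate equal to $1$ (if $x_k^* = 1$ the fixed-point equation forces $p_{01,d}^k(x^*) = 1$, impossible since each factor $1-\beta a_k x_j^* \geq 1-\beta > 0$); hence $\mu_i(x^*) < 1$ and $a_i(x^*) > 0$ for every $i$ and every $\alpha\in[0,1]$, and your propagation along $\mcal{G}_C$ goes through unchanged.
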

The existence of such a fixed point suggests the epidemic has an endemic state, where the disease spreads fast enough to sustain an epidemic in the network. Our condition coincides with the condition for existence, uniqueness, and global asymptotic stability of the non-trivial fixed point $q^*$ of $\psi$, which is $\lambda_{\text{max}}(\beta A_C + (1-\delta)I) > 1$, i.e. when the origin in the linearized dynamics is unstable (Theorem 5.1, \cite{Hassibi2013}). This condition incorporates the factors that contribute to the rate of spreading -  $\delta,\beta$, and the contact network $A_C$. 
The proof makes use of the following lemmas.
\begin{lemma}[Lemma 3.1, \cite{Hassibi2013}] \label{lemma_threshold}
	There exists a vector $\nu \succ 0_n$ such that $(\beta A_C - \delta I_n)\nu \succ 0_n$ if and only if $\lambda_{\text{max}}(\beta A_C + (1-\delta)I_n) > 1$.
\end{lemma}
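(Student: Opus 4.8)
The plan is to rephrase the claim in terms of the single nonnegative matrix $M := \beta A_C + (1-\delta)I_n$. Since $(\beta A_C - \delta I_n)\nu = M\nu - \nu$, the condition $(\beta A_C - \delta I_n)\nu \succ 0_n$ is exactly $M\nu \succ \nu$; and because $A_C$ has nonnegative entries with $\beta,1-\delta>0$, the matrix $M$ is symmetric with nonnegative entries, so its largest eigenvalue coincides with its spectral radius, $\lambda_{\text{max}}(M)=\rho(M)$ (the real number $\rho(M)$ is itself an eigenvalue by Perron--Frobenius, and symmetry makes it the largest). The task thus reduces to the equivalence: there exists $\nu\succ 0_n$ with $M\nu\succ\nu$ if and only if $\rho(M)>1$. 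I would prove the two directions separately, since they require rather different tools.

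For the direction ``existence of $\nu$ $\Rightarrow$ $\rho(M)>1$'' I would argue by iteration, and this step needs no structural hypothesis on $A_C$. Given $\nu\succ 0_n$ with $M\nu\succ\nu$ strictly, the finitely many ratios $(M\nu)_i/\nu_i$ are all strictly larger than $1$, so setting $1+\epsilon:=\min_i (M\nu)_i/\nu_i$ gives $\epsilon>0$ and $M\nu\succeq(1+\epsilon)\nu$. Multiplying repeatedly by the nonnegative matrix $M$ preserves the partial order $\succeq$, so by induction $M^k\nu\succeq(1+\epsilon)^k\nu\succ 0_n$ for all $k$. Taking any monotone vector norm then yields $\|M^k\nu\|\geq(1+\epsilon)^k\|\nu\|$, so the induced operator norm satisfies $\|M^k\|\geq(1+\epsilon)^k$, and Gelfand's formula $\rho(M)=\lim_k\|M^k\|^{1/k}$ gives $\rho(M)\geq 1+\epsilon>1$.

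For the converse ``$\rho(M)>1$ $\Rightarrow$ existence of $\nu$'' I would invoke Perron--Frobenius. Under the standing assumption that the contact network $\mcal{G}_C$ is connected, $A_C$ is irreducible, hence so is the nonnegative matrix $M=\beta A_C+(1-\delta)I_n$. Perron--Frobenius then supplies a strictly positive eigenvector $v\succ 0_n$ with $Mv=\rho(M)v$; since $\rho(M)>1$ and $v\succ 0_n$, this gives $Mv=\rho(M)v\succ v$, so $\nu:=v$ works. Equivalently one may quote the max form of the Collatz--Wielandt characterization, $\rho(M)=\max_{x\succ 0_n}\min_i (Mx)_i/x_i$, and read off that the maximizer $\nu$ satisfies $\min_i(M\nu)_i/\nu_i=\rho(M)>1$, i.e. $M\nu\succ\nu$.

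The step I expect to be the crux is precisely this converse, and the point to be careful about is irreducibility: without connectivity the equivalence is false. For instance, a block-diagonal $M$ whose first block has spectral radius above $1$ and whose second block has spectral radius below $1$ still satisfies $\rho(M)>1$, yet no strictly positive $\nu$ can satisfy $M\nu\succ\nu$ on the second block -- indeed the forward direction, applied to that block, would then force its spectral radius above $1$, a contradiction. Thus the proof must lean on connectivity (irreducibility of $A_C$) exactly in the converse direction, whereas the forward direction holds for any nonnegative $M$.
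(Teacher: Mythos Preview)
Your proof is correct, and the paper does not actually supply its own argument for this lemma: it is quoted from \cite{Hassibi2013} and the surrounding text merely remarks that ``the proof applies the Perron--Frobenius theorem for nonnegative irreducible matrices'' and that connectedness of $\mcal{G}_C$ is needed. Your proposal is consistent with that remark---you invoke Perron--Frobenius for the converse and correctly isolate connectivity as being essential only in that direction---so there is nothing to compare against beyond noting that you have filled in the details the paper omits.
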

The connectedness assumption for $\mcal{G}_C$ is necessary for the above Lemma because the proof applies the Perron-Frobenius theorem for nonnegative irreducible matrices.
The next result is an equivalent formulation of Brouwer's fixed point theorem. 
\begin{lemma}\label{Brouwer}(Theorem 4.2.3, \cite{FPT})\label{lemma_fixed_point_existence}: Suppose $f_i : D_n \rightarrow \mathbb{R}, i = 1,\ldots,n$ are continuous mappings, where 
\[ D_n = \{x \in \mathbb{R}^n : x_i \in [\ell_i,u_i], \ \forall i \} \]
for real numbers $\ell_i,u_i$. We also define the set
\[ D_{-i} = \{ x_{-i} \in \mathbb{R}^{n-1} : x_j \in [\ell_j,u_j] \ \forall j \neq i \} \] 
If for every $i$ and for all $x_{-i} \in D_{-i}$,
\begin{align}
	f_i(x_1,\ldots,\ell_i,\ldots,x_n)  &= f_i(x_{-i},\ell_i) \geq 0 \label{eq:loweredge} \\
	f_i(x_1,\ldots,u_i,\ldots,x_n) &= f_i(x_{-i},u_i) \leq 0 \label{eq:upperedge} ,
\end{align}
then there exists a point $x^* \in D_n$ such that $f_i(x^*) = 0$, for all $i = 1,\ldots,n$. 
\end{lemma}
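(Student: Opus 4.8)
The plan is to recognize this statement as a Poincar\'e--Miranda-type theorem, the higher-dimensional analogue of the intermediate value theorem, and to derive it from Brouwer's fixed point theorem (which the paper already flags as the equivalent formulation). The idea is to convert the componentwise sign conditions \eqref{eq:loweredge}--\eqref{eq:upperedge} into a continuous self-map of the box $D_n$ whose fixed points are exactly the common zeros of the $f_i$.

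Concretely, I would first dispose of the degenerate case by assuming $\ell_i \le u_i$ for every $i$, so that $D_n = \prod_i [\ell_i,u_i]$ is a nonempty, compact, convex subset of $\mathbb{R}^n$. Then I would define the auxiliary map $g : D_n \to D_n$ coordinatewise by
\[ g_i(x) \triangleq \max\bigl(\ell_i, \min(u_i, x_i + f_i(x))\bigr), \]
i.e. $g_i(x)$ is the truncation of $x_i + f_i(x)$ back into the interval $[\ell_i,u_i]$. Since each $f_i$ is continuous and truncation is continuous, $g$ is a continuous self-map of the compact convex set $D_n$, so Brouwer's fixed point theorem yields a point $x^* \in D_n$ with $g(x^*) = x^*$.

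The heart of the argument is then to show that the boundary hypotheses force the truncation to be inactive at $x^*$, so that the fixed-point equation collapses to $f_i(x^*) = 0$. I would fix $i$ and examine $g_i(x^*) = x_i^*$ by cases. If $x_i^* + f_i(x^*) \in [\ell_i,u_i]$, the truncation is transparent and $x_i^* = x_i^* + f_i(x^*)$, giving $f_i(x^*) = 0$ directly. If instead $x_i^* + f_i(x^*) > u_i$, then $g_i(x^*) = u_i$ forces $x_i^* = u_i$; but hypothesis \eqref{eq:upperedge} gives $f_i(x_{-i}^*, u_i) \le 0$, whence $x_i^* + f_i(x^*) = u_i + f_i(x^*) \le u_i$, a contradiction. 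Symmetrically, $x_i^* + f_i(x^*) < \ell_i$ forces $x_i^* = \ell_i$ and contradicts \eqref{eq:loweredge}. Hence only the first case survives, and since $i$ was arbitrary, $f_i(x^*) = 0$ for all $i$, which is the claim.

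I expect the only delicate point to be fixing the orientation of the truncation map so that the given inequality directions (lower face $\ge 0$, upper face $\le 0$) are exactly what is needed to eliminate the two boundary cases; had the signs been reversed one would instead truncate $x_i - f_i(x)$. Verifying that $g$ genuinely maps into $D_n$ and is continuous is routine, as is the degenerate subcase $\ell_i = u_i$, where $f_i(x^*) \ge 0$ and $f_i(x^*) \le 0$ hold simultaneously and force $f_i(x^*) = 0$ with no case analysis needed.
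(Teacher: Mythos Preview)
Your argument is correct: this is the standard Poincar\'e--Miranda proof via the truncated shift map, and the case analysis showing the truncation is inactive at the Brouwer fixed point is carried out properly, with the sign convention (lower face $\geq 0$, upper face $\leq 0$) matched correctly to the choice $x_i + f_i(x)$.

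There is, however, nothing to compare against: the paper does not prove this lemma. It is stated as a quotation of Theorem~4.2.3 from the cited reference, introduced only with the remark that it is ``an equivalent formulation of Brouwer's fixed point theorem,'' and then used as a black box in the proof of Theorem~\ref{theorem_nontrivial_fixed_point}. So your write-up supplies a proof where the paper has none; it is sound and self-contained, and indeed confirms the equivalence the paper alludes to.
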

The final lemma needed is a technical result for the mean-field mappings $\phi_i$.
\begin{lemma}\label{cstar}
	For each $i\in\mcal{N}$, define the maps $f_i : [0,1]^n \rightarrow \mbb{R}$
	\begin{align} 
		f_i(x) &\triangleq \phi_i(x) - x_i \nonumber \\  
		&= -\delta x_i + (1-(1-\delta)x_i)p_{01,d}^i(x)
	\end{align}
	Then for any $i\in\mcal{N}$ and $x_{-i} \in [0,1]^{n-1}$, the function $f_i(x_{-i},\cdot)$ has a unique root $c_i^*(x_{-i}) \in [0,1)$ which depends continuously on $x_{-i}$. Furthermore, one can find a sequence $x_{-i}^k \rightarrow \bold{0}_{n-1}$ s.t. $c_i^*(x_{-i}^k)$ is monotonically decreasing to 0.
\end{lemma}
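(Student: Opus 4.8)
Fix $i\in\mcal{N}$ and $x_{-i}\in[0,1]^{n-1}$ and study $g(y)\triangleq f_i(x_{-i},y)$ for $y\in[0,1]$. The plan is: (i) show $g$ has a unique zero $c_i^*(x_{-i})\in[0,1)$, (ii) upgrade this to continuous dependence on $x_{-i}$, and (iii) exhibit the monotone sequence by restricting to a short segment of a ray through the origin. For existence, I would check endpoints: $g(0)=p_{01,d}^i(x_{-i},0)\geq 0$ since $p_{01,d}^i$ takes values in $[0,1]$; and $g(1)=-\delta\bigl(1-p_{01,d}^i(x_{-i},1)\bigr)<0$, because $\beta<1$ forces every factor $1-\beta a_i(x)x_j$ in \eqref{eq:p01d} to be strictly positive, so $p_{01,d}^i<1$ strictly. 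Since $p_{01,d}^i$ extends to a polynomial in the coordinates, $g$ is continuous, and the intermediate value theorem gives a root in $[0,1)$.

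For uniqueness I would show $g$ is strictly decreasing by differentiating,
\[
g'(y) = -\delta-(1-\delta)p_{01,d}^i(x_{-i},y) + \bigl(1-(1-\delta)y\bigr)\,\partial_y p_{01,d}^i(x_{-i},y).
\]
The first two terms are negative; in the third, $1-(1-\delta)y\geq\delta>0$ on $[0,1]$, and $\partial_y p_{01,d}^i(x_{-i},y)\leq 0$. The latter is the one structural observation the argument needs: the coordinate $x_i=y$ enters $p_{01,d}^i$ only through the global-broadcast term of $\mu_i$ in \eqref{eq:mu} (the product in \eqref{eq:p01d} runs over $j\in\mcal{N}_i^C\not\ni i$, and likewise $i\notin\mcal{N}_i^I$), so increasing $y$ increases $\mu_i$, decreases $a_i$, raises each factor $1-\beta a_i x_j$, and hence lowers $p_{01,d}^i$. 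Thus $g'<0$ and the root $c_i^*(x_{-i})\in[0,1)$ is unique. Continuity of $x_{-i}\mapsto c_i^*(x_{-i})$ then follows by a compactness argument: $f_i$ is jointly continuous, so for $x_{-i}^k\to x_{-i}$ every subsequence of $c_i^*(x_{-i}^k)\in[0,1]$ has a further subsequence converging to some $c$ with $f_i(x_{-i},c)=0$, whence $c=c_i^*(x_{-i})$ by uniqueness; so the whole sequence converges. (Equivalently, apply the implicit function theorem in the interior, where $\partial_{x_i}f_i=g'\neq0$, and handle the boundary value $c_i^*=0$ directly.)

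For the last claim, note first that $c_i^*(\bold{0}_{n-1})=0$, since $f_i(\bold{0}_{n-1},0)=p_{01,d}^i(\bold{0}_n)=0$. I would then restrict to the ray $x_{-i}=t\bold{1}_{n-1}$, $t\in[0,\varepsilon]$. On this ray $p_{01,d}^i(t\bold{1}_{n-1},y)=1-\bigl(1-\beta a_i t\bigr)^{|\mcal{N}_i^C|}$ with $a_i=1-\gamma t-\tfrac{1-\alpha}{n}y$ for some constant $\gamma\in(0,1]$, and the scalar map $t\mapsto\beta a_i t$ is nondecreasing for $\varepsilon$ small enough (e.g. $\varepsilon\le\tfrac{1-1/n}{2\gamma}$). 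Hence $F(t,y)\triangleq f_i(t\bold{1}_{n-1},y)$ is nondecreasing in $t$ and still strictly decreasing in $y$ on $[0,\varepsilon]\times[0,1]$, which forces its root $h(t)\triangleq c_i^*(t\bold{1}_{n-1})$ to be nondecreasing: for $t_1<t_2$ we have $0=F(t_1,h(t_1))\le F(t_2,h(t_1))$ while $F(t_2,h(t_2))=0$, so strict monotonicity of $F(t_2,\cdot)$ gives $h(t_1)\le h(t_2)$. Taking $x_{-i}^k\triangleq\tfrac{\varepsilon}{k}\bold{1}_{n-1}\to\bold{0}_{n-1}$, the values $c_i^*(x_{-i}^k)=h(\varepsilon/k)$ decrease monotonically to $h(0)=0$ by the continuity just established.

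The main obstacle is step (iii): because of the awareness feedback, $c_i^*$ need not be monotone in $x_{-i}$ on all of $[0,1]^{n-1}$, so one cannot slide along an arbitrary ray all the way. The fix is to remain in a small neighborhood of the origin where the effective transmission pressure $\beta a_i t$ is still increasing in $t$, which restores the monotone comparison for the root. The other point requiring care is the sign claim $\partial_y p_{01,d}^i\le0$, which rests on $x_i$ appearing in $p_{01,d}^i$ only via the global term of $\mu_i$.
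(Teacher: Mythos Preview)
Your argument is correct, and steps (i)--(ii) match the paper's proof closely: the paper also checks the endpoint signs, shows $f_i(x_{-i},\cdot)$ is strictly decreasing using that $p_{01,d}^i(x_{-i},x_i)$ is decreasing in $x_i$ (``$x_i$ contributes to global awareness''), and deduces continuity of $c_i^*$---though via continuous dependence of polynomial roots on coefficients rather than your compactness/subsequence argument. These are minor stylistic variations.

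The real difference is in step (iii). You work to construct an \emph{explicitly} monotone sequence by restricting to the ray $x_{-i}=t\bold{1}_{n-1}$, checking that $\beta a_i t$ is nondecreasing in $t$ near the origin so that $F(t,y)$ is monotone in both arguments, and then reading off monotonicity of the root $h(t)$. This is correct but more than the lemma requires. The paper's route is shorter: once continuity gives $c_i^*(x_{-i}^k)\to c_i^*(\bold{0}_{n-1})=0$ along \emph{any} sequence $x_{-i}^k\to\bold{0}_{n-1}$, it simply passes to a subsequence along which the nonnegative real numbers $c_i^*(x_{-i}^k)$ decrease monotonically to $0$. Your construction buys an explicit ray and a concrete rate (the $\varepsilon$ bound), which could be useful if one later needed quantitative control; the paper's extraction argument buys brevity and avoids the awareness-feedback complication you flag as the ``main obstacle.''
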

\begin{proof}
	For any $i\in\mcal{N}$ and $x_{-i} \in [0,1]^{n-1}$,
	\begin{equation}\label{eq:edge}
		f_i(x_{-i},0) = p_{01,d}^i(x_{-i},0) \geq 0.
	\end{equation}
	and 
	\begin{equation}\label{eq:negative_corner}
		f_i(x_{-i},1) = \delta(p_{01,d}^i(x_{-i},1) - 1) < 0.
	\end{equation}
	The function $f_i(x_{-i},\cdot)$ is strictly decreasing: for $a,b \in [0,1]$ s.t. $a < b$, $f_i(x_{-i},a) - f_i(x_{-i},b)$ is given by
	\begin{align}
		(b-a)&(\delta + (1-\delta)p_{01,d}^i(x_{-i},b)) + \cdots \nonumber \\
		&+ (1-a(1-\delta))( p_{01,d}^i(x_{-i},a) - p_{01,d}^i(x_{-i},b)) \nonumber \\
		&> 0. \nonumber
	\end{align}
	This follows because $p_{01,d}^i(x_{-i},x_i)$ is decreasing in $x_i$ ($x_i$ contributes to global awareness). Hence for every $x_{-i} \in [0,1]^{n-1}$, there is a unique $c_i^*(x_{-i}) \in [0,1)$ s.t. $f_i(x_{-i},c_i^*(x_{-i})) = 0$, and $c_i^*(x_{-i})$ depends continuously on $x_{-i}$. To see this, observe that $c_i^*(x_{-i}) \in [0,1)$ is a root of the equation
	\begin{equation}
		(1-(1-\delta)x_i)\left(1 - \prod_{j\in\mcal{N}_i^C}(1-a_i(x_{-i},x_i)\beta x_j) \right) - \delta x_i = 0,
	\end{equation}
	which is a polynomial in $x_i$. The coefficients of the polynomial depend continuously on $x_{-i} \in [0,1]^{n-1}$, and the roots of any polynomial are continuous with respect to its coefficients. Consequently, for any sequence $x_{-i}^k \rightarrow \bold{0}_{n-1}$, $c_i^*(x_{-i}^k) \rightarrow c_i^*(\bold{0}_{n-1}) = 0$ by continuity. This allows us to select a subsequence of $x_{-i}^k$ such that $c_i^*$ is monotonically decreasing along the subsequence. 
\end{proof}
\noindent We are now ready to prove the main result of this section.
\begin{proof}[Proof of Theorem \ref{theorem_nontrivial_fixed_point}:]
Let the mappings $f_i$, $i\in\mcal{N}$ be as in Lemma \ref{cstar}. We need to verify \eqref{eq:loweredge} and \eqref{eq:upperedge} hold for all $i$ and for choices of $\ell_i,u_i$ satisfying $0 < \ell_i < u_i$. This ensures the awareness dynamic $\phi$ has a fixed point other than the origin. 
\begin{figure}[t]
	\centering
	\includegraphics{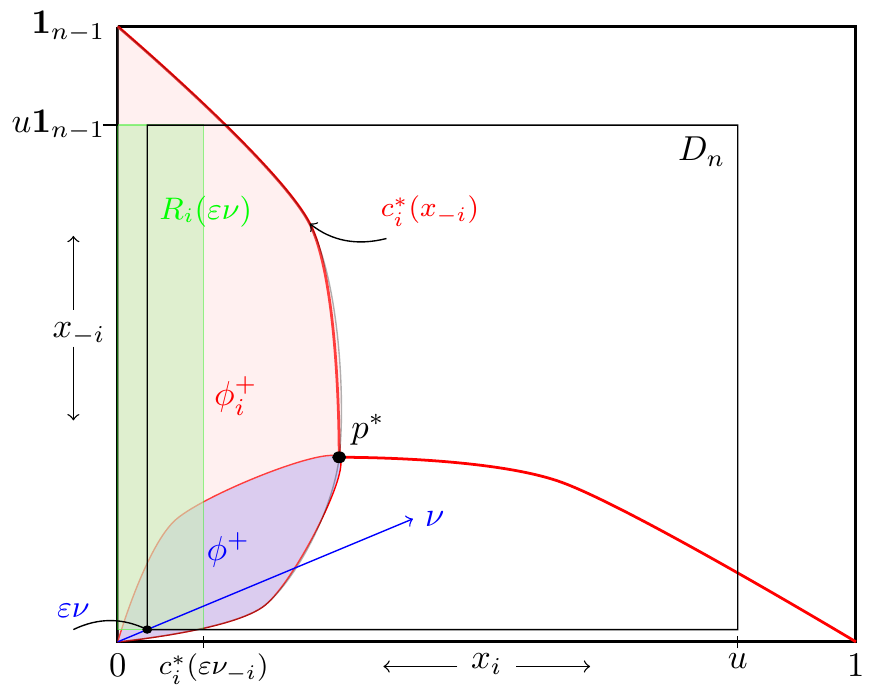}
	\caption{Diagram of the proof of Theorem \ref{theorem_nontrivial_fixed_point}. Here, $p^*$ denotes a nontrivial fixed point of $\phi$.}
	\label{fig:proof}
\end{figure}
Choose $u_i = u$ where $u$ satisfies
\begin{equation}
	\max_{i\in\mcal{N}} \max_{x_{-i}\in [0,1]^{n-1}}c_i^*(x_{-i}) < u < 1.
\end{equation}
Then for all $x_{-i} \in [0,u]^{n-1}$,
\begin{equation}
	f_i(x_{-i},u) < 0
\end{equation}
since $u > c_i^*(x_{-i})$. Thus, \eqref{eq:upperedge} holds, regardless of the choice of $\ell_i$. However, it remains to find the $\ell_i > 0$ s.t. \eqref{eq:loweredge} is satisfied. Let $f(x) \triangleq [f_1(x),\ldots, f_n(x)]^T$ and define the sets
\begin{equation} \label{eq:phiplus}
		\phi^+_i \triangleq \{x \in [0,1]^n : f_i(x) \geq 0 \}, \ \ \phi^+ \triangleq \bigcap_{j=1}^n \phi_j^+ 
\end{equation}
The Jacobian of $f$ about the origin is $(\beta A_C - \delta I_n)$. By Lemma \ref{lemma_threshold}, there exists a vector $\nu \succ \bold{0}_n$ such that $(\beta A_C - \delta I_n) \nu \succ \bold{0}_n$. Consequently for sufficiently small $\varepsilon > 0$,
\begin{equation}\label{eq:derivative}
	f(\varepsilon\nu) \succ \bold{0}_n,
\end{equation}
or $\varepsilon\nu \in \phi^+$. We also define the set
\begin{equation}
	R_i(x_{-i}) \triangleq \{ y \in \mbb{R}^n : y_i \in [0,c_i^*(x_{-i})], y_j \in [x_j,u], j\neq i\}
\end{equation}
If  $\varepsilon\nu_i \leq c_i^*(\varepsilon\nu_{-i})$ and $R_i(\varepsilon \nu) \subset \phi_i^+$, then \eqref{eq:loweredge} is satisfied, i.e. $f_i(x_{-i},\varepsilon\nu_i) \geq 0$ on $D_{-i} = \{\varepsilon\nu_{-i} \preceq x_{-i} \preceq u\bold{1}_{n-1}\}$ for $\varepsilon$ sufficiently small. We already have $\varepsilon\nu_i \leq c_i^*(\varepsilon\nu_{-i})$ because $\varepsilon\nu \in \phi^+ \subset \phi_i^+$. To show $R_i(\varepsilon\nu) \subset \phi_i^+$, by Lemma \ref{cstar} we can find a sequence $\varepsilon_k\nu \in \phi^+$ with $\varepsilon_k \rightarrow 0$ s.t. $c_i^*(\varepsilon_k\nu_{-i})$ is monotonically decreasing to 0. By stopping at a large enough $k$, we can take a $\varepsilon$ small enough such that 
\begin{equation}\label{eq:smallest}
	c_i^*(\varepsilon \nu_{-i}) = \min_{x_{-i}\in D_{-i}} c_i^*(x_{-i}) 
\end{equation}
Consequently, $R_i(\varepsilon\nu) \subset \phi_i^+$. Choosing $\varepsilon$ small enough to satisfy \eqref{eq:smallest} for all $i\in\mcal{N}$ verifies \eqref{eq:loweredge} by using $D_n = \{x \in \mbb{R}^n : x_j \in [\varepsilon\nu_j,u]\}$. See Figure \ref{fig:proof} for an illustration of the proof. By Lemma 2, $\phi$ has a fixed point contained in $D_n$.
\end{proof}
The condition of Theorem \ref{theorem_nontrivial_fixed_point} is independent of the awareness parameter $\alpha$ and the structure of the information network $\mathcal{G}_I$. Hence, social distancing alone cannot restore stability of the disease-free equilibrium point. However, social distancing lowers the overall metastable state of an epidemic.
\begin{corollary}\label{MFA_corollary}
If $q^* \succ \bold{0}_n$ is the unique nontrivial fixed point of $\psi$, then any nontrivial fixed point $p^*$ of $\phi$ satisfies $p^* \prec q^*$ whenever $\alpha \in [0,1)$. 
\end{corollary}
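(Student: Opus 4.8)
The plan is to combine three ingredients that are already available: (i) the benchmark map $\psi$ is monotone, i.e. $x\preceq y$ implies $\psi(x)\preceq\psi(y)$ on $[0,1]^n$; (ii) the strict domination $\phi(x)\prec\psi(x)$ for every $x\in[0,1]^n\setminus\{\bold{0}_n\}$ when $\alpha\in[0,1)$, recorded in the paragraph following \eqref{eq:MFA_standard}; and (iii) the fact that $\bold{0}_n$ and $q^*$ are the only fixed points of $\psi$ and that $q^*$ is globally asymptotically stable (Theorem 5.1, \cite{Hassibi2013}). Since the paper's convention is that ``nontrivial'' means strictly positive in every coordinate, $p^*\succ\bold{0}_n$ is in hand from the start.

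First I would verify monotonicity of $\psi$. Writing $\psi_i(x)=x_i(1-\delta)+(1-(1-\delta)x_i)p_{01}^i(x)$ and noting that $p_{01}^i(x)$ depends only on $\{x_j:j\in\mcal{N}_i^C\}$ and is nondecreasing in each such $x_j$ (with $i\notin\mcal{N}_i^C$), one checks $\partial\psi_i/\partial x_j\ge 0$ for $j\ne i$ (the nonnegative factor $1-(1-\delta)x_i\ge\delta>0$ times a nonnegative partial derivative of $p_{01}^i$) and $\partial\psi_i/\partial x_i=(1-\delta)(1-p_{01}^i(x))\ge 0$. Hence each component of $\psi$ is coordinatewise nondecreasing, so $\psi$ is order-preserving on the cube. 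This is standard and could alternatively just be cited.

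Next, apply (ii) to $p^*$: since $p^*\succ\bold{0}_n$ and $p^*=\phi(p^*)$, we obtain $p^*=\phi(p^*)\prec\psi(p^*)$. Now iterate $\psi$ from $p^*$: set $y^{(0)}=p^*$ and $y^{(k+1)}=\psi(y^{(k)})$. From $y^{(0)}\prec y^{(1)}$ and monotonicity of $\psi$, an induction gives $y^{(k)}\preceq y^{(k+1)}$ for all $k$, so $(y^{(k)})$ is coordinatewise nondecreasing and bounded above by $\bold{1}_n$, hence converges to some $y^{(\infty)}\in[0,1]^n$; by continuity of $\psi$ on the closed cube, $y^{(\infty)}$ is a fixed point of $\psi$. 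Because $y^{(\infty)}\succeq y^{(1)}\succ p^*\succ\bold{0}_n$, it is not the trivial fixed point, so by uniqueness it equals $q^*$. Therefore $q^*=y^{(\infty)}\succeq y^{(1)}=\psi(p^*)\succ p^*$, i.e. $p^*\prec q^*$. (Equivalently, one can invoke global asymptotic stability of $q^*$ to conclude $y^{(k)}\to q^*$ directly and combine it with $y^{(k)}\preceq q^*$ from the monotone iteration.)

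The only genuine computation is the monotonicity check for $\psi$; after that the argument is a short monotone-iteration that I do not expect to present any real obstacle. The one point that needs a word of care is that the limit $y^{(\infty)}$ may have some coordinates equal to $1$, but this is harmless since $\psi$ is continuous on the whole closed cube and its fixed-point set there is exactly $\{\bold{0}_n,q^*\}$; and strictness ($\prec$ rather than merely $\preceq$) comes for free from the first step $p^*\prec\psi(p^*)$, so no separate irreducibility argument is needed beyond what is already used to guarantee $p^*\succ\bold{0}_n$.
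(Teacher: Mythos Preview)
Your argument is correct. It differs from the paper's proof in how the strict inequality $p^*\prec q^*$ is extracted from the basic observation $p^*=\phi(p^*)\prec\psi(p^*)$. The paper does not iterate $\psi$; instead it defines $\psi^+\triangleq\{x:x\preceq\psi(x)\}$ and cites from \cite{Hassibi2013} that $q^*$ is the unique \emph{maximal} element of $\psi^+$ (i.e.\ $q^*\succ q$ for every $q\in\psi^+$, $q\neq q^*$). Since $p^*\preceq\phi(p^*)\prec\psi(p^*)$ puts $p^*$ in $\psi^+$ (indeed shows $\phi^+\subset\psi^+$), the conclusion $p^*\prec q^*$ follows immediately. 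Your monotone-iteration argument is essentially a self-contained proof of that cited maximality property restricted to the single point $p^*$: you trade the external structural lemma for a direct monotonicity computation on $\psi$ plus a limit argument. The payoff is that your proof stands alone without appeal to the $\psi^+$ characterization; the cost is the extra monotonicity verification and the convergence step. One minor remark: you do not actually need the full claim that the fixed-point set of $\psi$ on the closed cube is exactly $\{\bold{0}_n,q^*\}$, since your limit satisfies $y^{(\infty)}\succeq y^{(1)}\succ p^*\succ\bold{0}_n$ and hence is strictly positive, so uniqueness of the \emph{nontrivial} fixed point suffices.
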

\begin{proof}
Define the sets $\psi_i^+$ and $\psi^+$ similarly as in \eqref{eq:phiplus}. It was shown in \cite{Hassibi2013} that $q^*$ is the unique maximal element of $\psi^+$, i.e $q^* \succ q$, $\forall q \in \psi^+$, $q \neq q^*$. Observe $\phi(x) \prec \psi(x)$ for any $x \in [0,1]^n \backslash \bold{0}_n$.  Let $x \in \phi^+$, $x \neq \bold{0}_n$. Then $x \preceq \phi(x) \prec \psi(x)$, so $x \in \psi^+$. Therefore, $\phi^+ \subset \psi^+$, and since $p^* \in \phi^+$ for any nontrivial fixed point of $\phi$, $p^* \prec q^*$.
\end{proof}
The mean-field analysis of this section reveals the qualitative dynamics not addressed by an absorbing Markov chain analysis. Since the all-susceptible state is the unique absorbing state and accessible from every other state, the disease eradicates in finite time with probability one. This answers what happens in the long-run, whereas the MFA analysis answers what happens before this eventuality. The MFA analysis concurs with what is observed in simulations of the Markov chain dynamics - fast convergence to an endemic ``metastable" state that persists for a very long time before eradication. Numerical simulations suggest stability of one particular nontrivial fixed point $p^*$ of $\phi$.  A characterization of these fixed points for different random graph families is given in Section \ref{sec:sims}. 

\section{Stochastic Comparison Between Awareness and Benchmark Model}\label{sec:coupling}

In this section, we use the properties of monotone couplings to prove that adding awareness reduces the expectation of any increasing random variable quantifying an epidemic metric (e.g. absorption time, total infections), and that the benchmark chain dominates the distancing chain in terms of a partial ordering on sample paths. These results serve as a probabilistic analogue to the conclusions made in Section \ref{sec:MFA} about the strict ordering of the nontrivial fixed points in the corresponding mean-field approximations. Here, we have a closed form expression for the reduction whereas in Corollary \ref{MFA_corollary}, only an inequality relation is established. First, we provide relevant definitions and basic preliminary results of monotone couplings. For a full reference on monotone coupling, see Ch 4 of \cite{lindvall2002lectures}.


\subsection{Monotone couplings}
Consider a general countable space $X$. Recall a partially ordered set $(X,\preceq_X)$ is the set $X$ together with a relation $\preceq_X$ among its elements which satisfies for all $x,y$, and $z \in X$,
\begin{itemize}
	\item $x \preceq_X x$
	\item If $x\preceq_X y$ and $y \preceq_X z$, then $x\preceq_X z$
	\item If $x \preceq_X y$ and $y \preceq_X x$, then $x=y$.
\end{itemize}
\begin{definition}\label{coupling_def}
	Let $p_1,p_2$ be probability measures on a measurable space $(X,\mcal{F})$ and suppose $(X,\preceq_X)$ is a partially ordered set. A \emph{monotone coupling} of $p_1,p_2$ is a probability measure $p$ on $(X^2,\mcal{F}^2)$ such that for all $x',y' \in X$,
	\begin{equation}
		\label{eq:marg_probs}
		\sum_{x \preceq_X y'} p(x,y') = p_2(y') \ \text{and} \ \sum_{y \succeq_X x'} p(x',y) = p_1(x').
	\end{equation}
\end{definition}
Thus, for any $x,y \in X$ s.t $x \not\preceq_X y$, $p(x,y) = 0$, and the marginals of $p$ are $p_1,p_2$. 
\begin{example}\label{biased_coins}
	For an illustrative example of monotone coupling, consider two biased coins where the biases $q_A,q_B < 1$ for landing heads satisfy $q_A < q_B$. The coupling is a joint distribution assigned to the pair of coin flips to ensure the $q_A$ coin can never land heads with the $q_B$ coin landing tails, while the marginal coin flip probabilities remain the same. Specifically, define $p_X(0) = 1-q_X, p_X(1) = q_X$ for $X \in \{A,B\}$. Also, define $p_{AB} : \{0,1\}^2 \rightarrow [0,1]$ by
	\begin{equation}\label{eq:coins}
		\begin{cases}
			p_{AB}(0,0) &= \ 1-q_B \\
			p_{AB}(0,1) &= \ q_B-q_A \\
			p_{AB}(1,0) &= \ 0 \\
			p_{AB}(1,1) &= \ q_A
		\end{cases}
	\end{equation}
	Checking \eqref{eq:marg_probs}, the marginals are such that $\sum_{b \geq 1} p_{AB}(1,b) = q_A$, $\sum_{a \leq 1} p_{AB}(a,1) = q_B$ and  $\sum_{b \geq 0} p_{AB}(0,b) = 1-q_A$, $\sum_{a \leq 0} p_{AB}(a,0) = 1- q_B$ . Thus, $p_{AB}$ is a monotone coupling of $p_A,p_B$. 
\end{example}

We say a function $Z : X \rightarrow \mbb{R}$ is \emph{increasing in} $X$ if whenever $x \preceq_X y$, $Z(x) \leq Z(y)$. The next result characterizes the difference in expectations of increasing random variables between the marginals of a monotone coupling.
\begin{proposition}\label{incr_prop}
	Keeping the notation of Definition \ref{coupling_def}, suppose $p$ is a monotone coupling of $p_1,p_2$. If a random variable $Z : X \rightarrow \mbb{Z}_+$ is increasing in $X$, then 
\begin{equation}\label{eq:difference}
	\mbb{E}_{p_2}(Z) - \mbb{E}_{p_1}(Z) = \sum_{\tau=0}^\infty p(Z_\tau^c,Z_\tau)
\end{equation}
where $Z_\tau = \{x : Z(x) > \tau\}$.
\end{proposition}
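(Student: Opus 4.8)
The plan is to express each expectation $\mbb{E}_{p_j}(Z)$ for the $\mbb{Z}_+$-valued random variable $Z$ using the layer-cake (tail-sum) formula $\mbb{E}(Z) = \sum_{\tau=0}^\infty \mathrm{Pr}(Z > \tau)$, and then to write each tail probability as a suitable marginal of the monotone coupling $p$. Concretely, $\mbb{E}_{p_2}(Z) = \sum_{\tau=0}^\infty p_2(Z_\tau)$ and $\mbb{E}_{p_1}(Z) = \sum_{\tau=0}^\infty p_1(Z_\tau)$, where $Z_\tau = \{x : Z(x) > \tau\}$. So the goal reduces to showing, for each fixed $\tau$, that $p_2(Z_\tau) - p_1(Z_\tau) = p(Z_\tau^c, Z_\tau)$, after which summing over $\tau$ gives \eqref{eq:difference}. (One should remark that the tail sums converge, e.g.\ by assuming $Z$ has finite expectation under both marginals, or by working with the monotone convergence / extended-reals convention throughout; the differences are nonnegative by the argument below.)

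The key step is the identity $p_2(Z_\tau) - p_1(Z_\tau) = p(Z_\tau^c, Z_\tau)$ for each $\tau$. First I would use the marginal conditions in \eqref{eq:marg_probs} to write $p_2(Z_\tau) = \sum_{y \in Z_\tau} p_2(y) = \sum_{y \in Z_\tau} \sum_{x \preceq_X y} p(x,y)$ and $p_1(Z_\tau) = \sum_{x \in Z_\tau} p_1(x) = \sum_{x \in Z_\tau} \sum_{y \succeq_X x} p(x,y)$. The crucial observation is that because $Z$ is increasing and $Z_\tau$ is an up-set (if $x \in Z_\tau$ and $x \preceq_X y$ then $Z(y) \geq Z(x) > \tau$, so $y \in Z_\tau$), the pairs $(x,y)$ with $x \preceq_X y$ and $y \in Z_\tau$ split into those with $x \in Z_\tau$ and those with $x \in Z_\tau^c$. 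Hence
\begin{equation}
p_2(Z_\tau) = \sum_{\substack{x \preceq_X y \\ x \in Z_\tau,\, y \in Z_\tau}} p(x,y) + \sum_{\substack{x \preceq_X y \\ x \in Z_\tau^c,\, y \in Z_\tau}} p(x,y).
\end{equation}
For the first sum, since $Z_\tau$ is an up-set, $\{(x,y) : x \preceq_X y,\, x \in Z_\tau,\, y \in Z_\tau\} = \{(x,y): x\in Z_\tau,\, y \succeq_X x\}$, so by the other marginal relation it equals $\sum_{x \in Z_\tau}\sum_{y \succeq_X x} p(x,y) = p_1(Z_\tau)$. The second sum is exactly $p(Z_\tau^c, Z_\tau)$ by the notational convention for $p$ on product events. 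Subtracting gives the claimed per-$\tau$ identity.

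Finally I would sum over $\tau = 0, 1, 2, \ldots$; since every term $p(Z_\tau^c, Z_\tau) \geq 0$, Tonelli/monotone convergence justifies interchanging the sums and we obtain $\mbb{E}_{p_2}(Z) - \mbb{E}_{p_1}(Z) = \sum_{\tau=0}^\infty p(Z_\tau^c, Z_\tau)$, which is \eqref{eq:difference}. I expect the main (though modest) obstacle to be the bookkeeping in the per-$\tau$ identity: one must carefully use that $Z_\tau$ is an up-set under $\preceq_X$ — a direct consequence of $Z$ being increasing — to recognize that restricting the double sum defining $p_2(Z_\tau)$ to $x \in Z_\tau$ reproduces precisely $p_1(Z_\tau)$, with no leftover terms, and that the complementary part is a genuine product event $Z_\tau^c \times Z_\tau$ on which $p$ need not vanish (it vanishes only on pairs with $x \not\preceq_X y$). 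A secondary point to handle cleanly is convergence/finiteness of the tail sums so that the subtraction $\mbb{E}_{p_2}(Z) - \mbb{E}_{p_1}(Z)$ is well-defined; this is automatic if $Z$ is integrable under $p_1$ (hence under $p_2$, by the inequality the coupling produces), or can be stated in $[0,\infty]$.
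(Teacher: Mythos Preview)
Your proof is correct and follows essentially the same approach as the paper: both arguments use the tail-sum representation of the expectations and establish the per-$\tau$ identity $p_2(Z_\tau)-p_1(Z_\tau)=p(Z_\tau^c,Z_\tau)$ by exploiting that $Z_\tau$ is an up-set together with the marginal conditions of the monotone coupling. Your treatment is slightly more explicit about the layer-cake formula and convergence issues, but the core idea and structure match the paper's proof.
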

\begin{proof} 
	Consider the following quantities:
	\begin{align}
		p(Z_\tau,Z_\tau) &= \sum_{x\in Z_\tau}\sum_{y\in Z_\tau} p(x,y) \nonumber \\
		&= \sum_{x\in Z_\tau}\sum_{y \succeq_X x} p(x,y) = p_1(Z_\tau) \label{eq:incr_set} \\
		p(X,Z_\tau) &= p_2(Z_\tau)
	\end{align}
	The second sum over $\{y\in Z_\tau\}$ can be replaced with $\{y\succeq_X x\}$ in \eqref{eq:incr_set} because 1) for any $x\in Z_\tau$, we have $\{y : y\succeq_X x\} \subset Z_\tau$; and 2) since $p$ is a monotone coupling, for any $y\in Z_\tau$ s.t. $y \not\succeq_X x$, $p(x,y) = 0$. The last equality of \eqref{eq:incr_set} follows from \eqref{eq:marg_probs}. Since $(X,Z_\tau) \supset (Z_\tau,Z_\tau)$ we can write
	\begin{align}
		p_2(Z_\tau) - p_1(Z_\tau) &= p(X,Z_\tau) - p(Z_\tau,Z_\tau) \nonumber \\
		&= p((X,Z_\tau)\backslash (Z_\tau,Z_\tau)) \nonumber \\
		&= p(Z_\tau^c,Z_\tau) \nonumber 
	\end{align}
	Equation \eqref{eq:difference} immediately follows.
\end{proof}
\begin{example}
	Consider the biased coins of Example \ref{biased_coins}. One can extend this example to sequences of $m\geq 2$ flips, $\{0,1\}^m$ with the partial order $x \preceq y$ if $x_i \leq y_i$, $i = 1,\ldots,m$, for $x,y \in \{0,1\}^m$. Define
\begin{align}
	\overline{p}_X(x) &= \prod_{k=1}^m p_X(x_k) , \ X \in \{A,B\} \\
	\overline{p}_{AB}(x,y) &= \prod_{k=1}^m p_{AB}(x_k,y_k).
\end{align}
Then $\overline{p}_{AB}$ is a monotone coupling of $\overline{p}_A,\overline{p}_B$. For $x\in \{0,1\}^m$, let $Z(x) = \sum_{i=1}^m x_i$ be the random variable of the number of heads for any given toss sequence. Then $Z$ is increasing in $\{0,1\}^m$. By Proposition \ref{incr_prop}, 
\begin{equation}
	\mbb{E}_{\overline{p}_B}(Z) - \mbb{E}_{\overline{p}_A}(Z) = \sum_{\tau=0}^m \overline{p}_{AB}(Z_\tau^c,Z_\tau).
\end{equation}
Of course, one could trivially compute the above as $m(q_B - q_A)$ since the distribution of $Z$ is Bernoulli. However, Proposition \ref{incr_prop} generalizes the difference for any increasing $\mbb{Z}_+$-valued random variable over a partially ordered set.
\end{example}

The notion of stochastic domination is also relevant in our comparison analysis.
\begin{definition}
	An \emph{upper set} $\mcal{I}$ is a non-empty subset of $(X,\preceq_X)$ that satisfies the following property: if $x\in\mcal{I}$ and $y\succeq_X x$, then $y\in\mcal{I}$. Let $p_1,p_2$ be two probability measures on $(X,\mcal{F})$. Then $p_2$ \emph{stochastically dominates} $p_1$, written as $p_2 \succeq p_1$, if for any upper set $\mcal{I} \subset X$, $p_1(\mcal{I}) \leq p_2(\mcal{I})$. 
\end{definition}
Our comparison between benchmark and distancing chains falls into the framework of the above analysis.
\subsection{Sample path comparison analysis}
Our main result provides a construction of a monotone coupling between the benchmark and distancing probability distributions on sample paths. 

\begin{definition}
	A \emph{sample path} is a sequence $g = \{g^t\}_{t\in\mbb{Z}_+}$ such that $g^t \in\Omega$ and $K(g^t,g^{t+1}) > 0$ for all $t\geq 0$, and there is a $T < \infty$ such that $g^T = \bold{o}$. The set of sample paths is denoted by $\Gamma$.
\end{definition}
The \emph{absorption time} $T : \Gamma \rightarrow \mbb{Z}_+$  of a sample path $g$ is given by 
\begin{equation}\label{eq:T}
	T(g) \triangleq \min\{t : g^t = \bold{o}\}.
\end{equation}
Thus for all $g\in\Gamma$, $T(g) < \infty$. Also, $g^t = \bold{o}$ and $K(g^t,g^{t+1}) = 1$ for all $t \geq T(g)$. Note that $\Gamma$ is countable since it is the countable union of the finite sets $\{g : T(g) = t\}$ for $t = 0,1,2,\ldots$.

                                                                                                                                                                                                                                                                                                                                                                                                   The distribution $\mu_\pi : \mcal{P}(\Gamma) \rightarrow [0,1]$ on sample paths under the benchmark SIS chain with starting distribution $\pi\in\Delta(\Omega)$ is given, for any $A \in \mcal{P}(\Gamma)$, by 
\begin{equation}
	\label{eq:bench_measure_gamma}
	\mu_\pi(A) \triangleq \sum_{g\in A}\pi(g^0)\prod_{t=0}^{T(g)-1} K(g^t,g^{t+1})
\end{equation}
and similarly under the distancing chain by
\begin{equation}
	\label{eq:distancing_measure_gamma}
	\nu_\pi(A) \triangleq \sum_{g\in A} \pi(g^0)\prod_{t=0}^{T(g)-1} K_d(g^t,g^{t+1}).
\end{equation}
Also, note that $\Gamma$ is defined to exclude the set of sample paths that are never absorbed, $\{g:g^t \neq \bold{o}, \forall t\in\mbb{Z}_+\}$. These are infinite sequences that never terminate, and therefore are uncountable. The probabilities $\mu_\pi,\nu_\pi$, however are well-defined on $\Gamma$ without such sample paths: 
\begin{align}
	\sum_{g\in\Gamma} \mu_\pi(g) &= \sum_{t=0}^\infty \mu_\pi(\{g : T(g) = t\}) \\
	&= \sum_{s\in\Omega} \pi(s) \sum_{t=0}^\infty r_s(Q^t) - r_s(Q^{t+1}) \\
	&= 1
\end{align}
Here, $Q$ is the $2^{n}-1 \times 2^{n}-1$ sub-stochastic matrix of transition probabilities between non-absorbing states, and $r_s(Q)$ is the $s^{\text{th}}$ row-sum of $Q$. Hence, $r_s(Q^t) - r_s(Q^{t+1})$ is the probability a sample path starting from state $s$ is absorbed at time $t$. The elements of $Q^t$ approach zero as $t \rightarrow \infty$.
\begin{remark}
	($\Omega,\preceq_\Omega$) is a partially ordered set. For $s,s'\in\Omega$, $s \preceq_\Omega s'$ if $s_i \leq s'_i$ for all $i\in\mcal{N}$.
\end{remark}
\begin{remark}
	($\Gamma,\preceq_\Gamma$) is a partially ordered set. For $h,g \in\Gamma$, $h \preceq_\Gamma g$ if $h^t \preceq_\Omega g^t$ for all $t\in\mbb{Z}_+$.
\end{remark}
\begin{figure}[t]
	\centering
		\centering
		\includegraphics{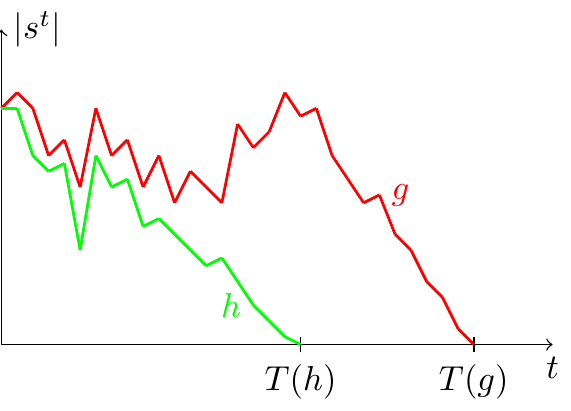}
	 	\caption{\small A pair of sample paths $(h,g)$ drawn from $\Phi_\pi$. }
		\label{fig:hgcoupling}
\end{figure}

Next, we present the main result of this section, which constructs a monotone coupling distribution of $\nu_\pi,\mu_\pi$ by exploiting the differences in node-level transition probabilities. 

\begin{theorem}\label{Phi_coupling}
	Suppose $x,y\in\Omega$ with $x \preceq_\Omega y$. For each $i\in\mcal{N}$, define $\varphi_i^{x,y} : \{0,1\}^2 \rightarrow [0,1]$ according to
	\begin{align}
		x_i &= y_i = 1, \ \ \
		\begin{cases}\label{eq_infected_infected}
			\varphi_i^{x,y}(0,0) &= \delta(1 - p_{01}^i(y)) \\ 
			\varphi_i^{x,y}(0,1) &= \delta(p_{01}^i(y) - p_{01,d}^i(x)) \\ 
			\varphi_i^{x,y}(1,0) &= 0 \\ 
			\varphi_i^{x,y}(1,1) &= 1 - \delta(1 - p_{01,d}^i(x)) 
		\end{cases}  \\
		x_i &= y_i = 0, \ \ \
		\begin{cases} 
			\varphi_i^{x,y}(0,0) &= 1-p_{01}^i(y) \\ 
			\varphi_i^{x,y}(0,1) &= p_{01}^i(y) - p_{01,d}^i(x) \\ 
			\varphi_i^{x,y}(1,0) &= 0 \\ 
			\varphi_i^{x,y}(1,1) &= p_{01,d}^i(x) 
		\end{cases}  \\
		x_i&=0,y_i=1, 
		\begin{cases} \label{eq:delta_mod}
			\varphi_i^{x,y}(0,0) &= \delta(1 - p_{01}^i(y)) \\ 
			\varphi_i^{x,y}(0,1) &= 1 - p_{01,d}^i(x) - \delta(1 - p_{01}^i(y)) \\ 
			\varphi_i^{x,y}(1,0) &= 0 \\
			\varphi_i^{x,y}(1,1) &= p_{01,d}^i(x)
		\end{cases}
	\end{align}
	Also, define $\varphi^{x,y} : \Omega^2 \rightarrow [0,1]$ for $x\preceq_\Omega y$ by
	\begin{equation}\label{eq:phi_omega}
		\varphi^{x,y}(\omega,\omega') \triangleq \prod_{i=1}^n \varphi^{x,y}_i(\omega_i,\omega_i') \ \ \forall \omega,\omega'\in\Omega.
	\end{equation}
	Lastly, define $\Phi_\pi : \Gamma^2 \rightarrow [0,1]$ for any $\pi\in\Delta(\Omega)$ by
	\begin{equation}\label{eq:phi_pi}
		\Phi_\pi(h,g) \triangleq \chi(h^0=g^0)\pi(h^0)\prod_{t=0}^{T(g)-1} \varphi^{h^t,g^t}(h^{t+1},g^{t+1}).
	\end{equation}
	Then $\Phi_\pi$ is a monotone coupling of $\nu_\pi,\mu_\pi$.
\end{theorem}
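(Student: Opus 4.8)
The plan is to check the three conditions that make $\Phi_\pi$ a monotone coupling of $\nu_\pi$ and $\mu_\pi$ in the sense of Definition \ref{coupling_def}: that it is a probability measure on $\Gamma^2$, that it is concentrated on $\{(h,g):h\preceq_\Gamma g\}$, and that summing out the first coordinate returns $\mu_\pi$ while summing out the second returns $\nu_\pi$. I would build the coupling up in three layers — a node-level kernel $\varphi_i^{x,y}$, the one-step state kernel $\varphi^{x,y}$, and the path-level $\Phi_\pi$ — proving the analogous statement at each layer.

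\emph{Node and one-step level.} Fix $x\preceq_\Omega y$ and $i\in\mcal{N}$. In each of the three cases, $\varphi_i^{x,y}$ is exactly the elementary monotone coupling \eqref{eq:coins} of the two Bernoulli kernels $\mbb{P}_i^d(x,\cdot)$ and $\mbb{P}_i(y,\cdot)$: the distancing ``success'' probability is $q_A=p_{01,d}^i(x)$ when $x_i=0$ and $q_A=1-\delta p_{00,d}^i(x)$ when $x_i=1$, and $q_B$ is the corresponding benchmark quantity at $y$. The only nontrivial point is $q_A\leq q_B$, which in all three cases reduces — in the mixed case $x_i=0,y_i=1$ after a line of arithmetic using $\delta<1$ and $p_{01}^i(y)<1$ — to $p_{01,d}^i(x)\leq p_{01}^i(y)$; this in turn holds because $p_{01,d}^i\leq p_{01}^i$ (the remark after \eqref{eq:p01d}) and $p_{01}^i$ from \eqref{eq:p01} is nondecreasing in the infection variables, with $x\preceq_\Omega y$. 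By Example \ref{biased_coins}, $\varphi_i^{x,y}$ is then a monotone coupling of $\mbb{P}_i^d(x,\cdot),\mbb{P}_i(y,\cdot)$ on $(\{0,1\},\leq)$: summing the second (resp.\ first) argument returns $\mbb{P}_i^d(x,\cdot)$ (resp.\ $\mbb{P}_i(y,\cdot)$), and $\varphi_i^{x,y}(1,0)=0$. Taking products over $i$, $\varphi^{x,y}$ is a probability vector on $\Omega^2$ that vanishes unless $\omega\preceq_\Omega\omega'$ and satisfies $\sum_{\omega'}\varphi^{x,y}(\omega,\omega')=K_d(x,\omega)$ and $\sum_{\omega}\varphi^{x,y}(\omega,\omega')=K(y,\omega')$. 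Two evaluations at $x=\bold{o}$ will be needed later: since $p_{01,d}^i(\bold{o})=0$, one has $\sum_{s'}\varphi^{\bold{o},s}(\bold{o},s')=K_d(\bold{o},\bold{o})=1$, so $\varphi^{\bold{o},s}(\bold{o},\cdot)$ is a probability vector dominated entrywise by the probability vector $K(s,\cdot)$ and hence equals it; thus $\varphi^{\bold{o},s}(\bold{o},s')=K(s,s')$ for all $s,s'$, and in particular $\varphi^{\bold{o},\bold{o}}(\bold{o},\bold{o})=1$.

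\emph{Support and the $\mu_\pi$-marginal.} Set $\Phi_\pi(h,g)=0$ whenever some $h^t\not\preceq_\Omega g^t$, so \eqref{eq:phi_pi} is always meaningful. If $\Phi_\pi(h,g)>0$ then $h^0=g^0$ and every factor is positive; since $\varphi^{h^t,g^t}$ vanishes off the order, induction gives $h^{t+1}\preceq_\Omega g^{t+1}$, and for $t\geq T(g)$ the identity $g^t=\bold{o}$ together with $\bold{o}$ being $K$-absorbing forces $h^t=g^t=\bold{o}$; hence $h\preceq_\Gamma g$ (and positivity of the factors plus the marginal identities gives $K_d(h^t,h^{t+1})>0$ and $K(g^t,g^{t+1})>0$, so the supporting paths genuinely lie in $\Gamma$). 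For the $\mu_\pi$-marginal, fix $g\in\Gamma$ and $\tau:=T(g)$. By the support property the contributing $h$ are exactly those of the form $(g^0,h^1,\dots,h^{\tau-1},\bold{o},\bold{o},\dots)$, $\Phi_\pi(h,g)$ depends only on $(h^0,\dots,h^\tau)$, and terms with $h^\tau\neq\bold{o}$ vanish; so $\sum_h\Phi_\pi(h,g)=\pi(g^0)\sum_{h^1,\dots,h^\tau\in\Omega}\prod_{t=0}^{\tau-1}\varphi^{h^t,g^t}(h^{t+1},g^{t+1})$ with $h^0:=g^0$. Summing out $h^\tau$, then $h^{\tau-1}$, down to $h^1$, each step using $\sum_{\omega}\varphi^{h^t,g^t}(\omega,g^{t+1})=K(g^t,g^{t+1})$ (the constraint $\omega\preceq_\Omega g^{t+1}$ is free since $\varphi$ vanishes otherwise, and $h^t\preceq_\Omega g^t$ holds on the support), telescopes this to $\pi(g^0)\prod_{t=0}^{\tau-1}K(g^t,g^{t+1})=\mu_\pi(g)$.

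\emph{The $\nu_\pi$-marginal and conclusion.} Fix $h\in\Gamma$ and $T_h:=T(h)$. Only $g\succeq_\Gamma h$ contribute, and for such $g$ the product in \eqref{eq:phi_pi} splits as $\bigl(\prod_{t=0}^{T_h-1}\varphi^{h^t,g^t}(h^{t+1},g^{t+1})\bigr)\bigl(\prod_{t=T_h}^{T(g)-1}\varphi^{\bold{o},g^t}(\bold{o},g^{t+1})\bigr)$, the second block being the steps after $h$ has absorbed. Using $\varphi^{\bold{o},g^t}(\bold{o},g^{t+1})=K(g^t,g^{t+1})$, summing the second block over all tails $g^{T_h+1},g^{T_h+2},\dots$ compatible with a prescribed $g^{T_h}$ and with $g$ reaching $\bold{o}$ equals the probability that the benchmark chain started at $g^{T_h}$ is absorbed, which is $1$ since $\bold{o}$ is the unique $K$-absorbing state and is reachable from every state. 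Hence $\sum_g\Phi_\pi(h,g)=\pi(h^0)\sum_{g^1,\dots,g^{T_h}\in\Omega}\prod_{t=0}^{T_h-1}\varphi^{h^t,g^t}(h^{t+1},g^{t+1})$ with $g^0:=h^0$, and summing out $g^{T_h},\dots,g^1$ in turn with $\sum_{\omega'}\varphi^{h^t,g^t}(h^{t+1},\omega')=K_d(h^t,h^{t+1})$ telescopes to $\pi(h^0)\prod_{t=0}^{T_h-1}K_d(h^t,h^{t+1})=\nu_\pi(h)$. Finally $\Phi_\pi\geq 0$ and $\sum_{h,g}\Phi_\pi(h,g)=\sum_g\mu_\pi(g)=1$, so $\Phi_\pi$ is a probability measure on $\Gamma^2$; together with the support and marginal properties, it is a monotone coupling of $\nu_\pi,\mu_\pi$. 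I expect the $\nu_\pi$-marginal to be the main obstacle: because the product in \eqref{eq:phi_pi} is indexed by $T(g)$ rather than $T(h)$, the sum over $g$ mixes paths absorbed at different times, and the crux is to recognize that the ``surplus'' factors are themselves benchmark transitions whose tails sum to one by a.s.\ absorption of $K$. An alternative is to stratify by $T(g)=\tau$, show $\pi(h^0)\sum_{g^1,\dots,g^\tau}\prod_{t=0}^{\tau-1}\varphi^{h^t,g^t}(h^{t+1},g^{t+1})=\nu_\pi(h)$ for each $\tau\geq T_h$, and bound the not-yet-absorbed remainder by $\pi(h^0)\,r_{h^0}(Q^\tau)\to 0$ using that $\varphi^{x,y}$ is dominated by its second marginal $K(y,\cdot)$, paralleling the paper's proof that $\sum_g\mu_\pi(g)=1$. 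Everything else is the routine case-check in the first layer and the telescoping sums.
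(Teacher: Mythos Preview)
Your proof is correct and follows essentially the same three-layer structure as the paper: node-level coupling $\varphi_i^{x,y}$ of $\mbb{P}_i^d(x,\cdot)$ and $\mbb{P}_i(y,\cdot)$, product to get the one-step coupling $\varphi^{x,y}$ of $K_d(x,\cdot)$ and $K(y,\cdot)$, then the path-level marginals. The only notable difference is in the $\nu_\pi$-marginal: the paper truncates the product at $T(h)$ by directly invoking $\sum_{g^t}\varphi^{\bold{o},g^{t-1}}(\bold{o},g^t)=K_d(\bold{o},\bold{o})=1$ for $t>T(h)$, whereas you additionally identify $\varphi^{\bold{o},s}(\bold{o},\cdot)=K(s,\cdot)$ and interpret the tail sum as the benchmark absorption probability --- a slightly more explicit route to the same conclusion.
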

\begin{proof}
See Appendix.
\end{proof}
The couplings between node-level transition probabilities $\varphi_i^{x,y}$ given in \eqref{eq_infected_infected}-\eqref{eq:delta_mod} are used to establish a coupling between benchmark and distancing probability distributions on sample paths in \eqref{eq:phi_pi}. The form of the $\varphi_i^{x,y}$ is identical to the method in which two biased coins are coupled in \eqref{eq:coins}.
When the coupling rule is applied to each node's probability of infection, it ensures no node can be infected in the distancing chain while being susceptible in the benchmark chain. Consequently, the monotone coupling $\Phi_\pi$ is a distribution on pairs of sample paths $(h,g)$ satisfying $h^0 = g^0$, $h\preceq_\Gamma g$ (see Figure \ref{fig:hgcoupling}) and marginally, $h\sim\nu_\pi$, $g\sim\mu_\pi$. The next result characterizes the difference between $\mu_\pi$ and $\nu_\pi$-expectations of any non-negative increasing function on the sample paths $\Gamma$ with respect to the coupling distribution $\Phi_\pi$. 
\begin{corollary}
	For any increasing $\mbb{Z}_+$-valued random variable $Z$ in $\Gamma$,
	\begin{equation}
		\label{eq:thm_gap}
		\mbb{E}_{\mu_\pi}(Z) - \mbb{E}_{\nu_\pi}(Z) = \sum_{\tau=0}^\infty \Phi_\pi(Z_\tau^c,Z_\tau).
	\end{equation}
	where $Z_\tau = \{x \in \Gamma : Z(x) > \tau\}$.
\end{corollary}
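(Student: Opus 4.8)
The plan is to obtain this identity as an immediate consequence of Proposition~\ref{incr_prop} applied to the coupling constructed in Theorem~\ref{Phi_coupling}. Concretely, I would instantiate Proposition~\ref{incr_prop} with the countable space $X = \Gamma$, equipped with the partial order $\preceq_\Gamma$ (as noted in the text, $(\Gamma,\preceq_\Gamma)$ is a partially ordered set and $\Gamma$ is countable, being a countable union of the finite sets $\{g : T(g)=t\}$), with the two measures $p_1 = \nu_\pi$ and $p_2 = \mu_\pi$, and with the coupling $p = \Phi_\pi$. All of the data required by Proposition~\ref{incr_prop} is then in place.

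First I would verify the hypotheses. By Theorem~\ref{Phi_coupling}, $\Phi_\pi$ is a monotone coupling of $\nu_\pi$ and $\mu_\pi$ in the sense of Definition~\ref{coupling_def}; in particular its first-coordinate marginal is $\nu_\pi$ and its second-coordinate marginal is $\mu_\pi$. The assignment of roles matters, since reversing it would flip the sign, and it is pinned down by the direction of domination built into $\Phi_\pi$: the coupling is supported on pairs $(h,g)$ with $h\preceq_\Gamma g$, $h\sim\nu_\pi$ (distancing) and $g\sim\mu_\pi$ (benchmark), so when matching Definition~\ref{coupling_def} we must take $p_1=\nu_\pi$ and $p_2=\mu_\pi$. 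The random variable $Z$ is $\mbb{Z}_+$-valued and increasing in $(\Gamma,\preceq_\Gamma)$ by hypothesis. Proposition~\ref{incr_prop} then applies verbatim and yields
\begin{equation}
	\mbb{E}_{\mu_\pi}(Z) - \mbb{E}_{\nu_\pi}(Z) = \sum_{\tau=0}^\infty \Phi_\pi(Z_\tau^c,Z_\tau), \quad Z_\tau = \{x\in\Gamma : Z(x) > \tau\},
\end{equation}
which is exactly \eqref{eq:thm_gap}.

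If one wants to be careful about convergence, in case $\mbb{E}_{\mu_\pi}(Z)$ is not finite, I would observe that the proof of Proposition~\ref{incr_prop} shows each summand equals $\mu_\pi(Z_\tau) - \nu_\pi(Z_\tau)\ge 0$ (this nonnegativity is precisely the stochastic domination $\mu_\pi\succeq\nu_\pi$ carried by the monotone coupling, since each $Z_\tau$ is an upper set). Hence the partial sums $\sum_{\tau=0}^{N-1}\Phi_\pi(Z_\tau^c,Z_\tau) = \sum_{\tau=0}^{N-1}\big(\mu_\pi(Z_\tau)-\nu_\pi(Z_\tau)\big) = \mbb{E}_{\mu_\pi}(Z\wedge N) - \mbb{E}_{\nu_\pi}(Z\wedge N)$ increase to the claimed difference by monotone convergence, and the identity holds as an equality in $[0,\infty]$.

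There is essentially no real obstacle: the substance has already been absorbed into Proposition~\ref{incr_prop} and Theorem~\ref{Phi_coupling}, and the corollary is just their composition. The one point deserving genuine care is the bookkeeping of which marginal of $\Phi_\pi$ is $\nu_\pi$ and which is $\mu_\pi$, which is determined by the sample-path domination $h\preceq_\Gamma g$ established in Theorem~\ref{Phi_coupling}. To close, I would note the intended applications: taking $Z=T$ (absorption/eradication time) or $Z(g)=\sum_{t\in\mbb{Z}_+}\sum_{i\in\mcal{N}} g_i^t$ (total infections over the path), both of which are increasing in $(\Gamma,\preceq_\Gamma)$, shows awareness reduces these expectations, with the reduction given in closed form by the coupling mass placed on pairs $(h,g)$ with $h\in Z_\tau^c$, $g\in Z_\tau$, summed over the thresholds $\tau$.
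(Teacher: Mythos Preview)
Your proposal is correct and matches the paper's own proof, which simply states ``Immediate from Theorem~\ref{Phi_coupling} and Proposition~\ref{incr_prop}.'' You have merely spelled out the instantiation (and added a careful remark on convergence), but the approach is identical.
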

\begin{proof}
	Immediate from Theorem \ref{Phi_coupling} and Proposition \ref{incr_prop}.
\end{proof}
One can think of an increasing $Z:\Gamma \rightarrow \mbb{Z}_+$ as an epidemic cost metric. Here, $\Phi_\pi(Z_\tau^c,Z_\tau)$ is simply the probability a benchmark sample path $g$ incurs a cost greater than $\tau$, while the corresponding distancing sample path $h$ costs less than $\tau$, where $(h,g) \sim \Phi_\pi$. 
The difference \eqref{eq:thm_gap} encodes many complex dependencies on the epidemic parameters $\delta$ and $\beta$, the awareness weight $\alpha$, and the graphs $\mcal{G}_C$ and $\mcal{G}_I$.
The following result establishes stochastic domination of the distancing chain by the benchmark chain. 
\begin{corollary}
	The benchmark chain stochastically dominates the distancing chain on sample paths, i.e. $\mu_\pi \succeq \nu_\pi$. 
\end{corollary}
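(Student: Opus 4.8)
The plan is to obtain this corollary directly from the monotone coupling $\Phi_\pi$ constructed in Theorem \ref{Phi_coupling}, reusing the bookkeeping already carried out in the proof of Proposition \ref{incr_prop}. Let $\mcal{I}\subset\Gamma$ be an arbitrary upper set; by the definition of stochastic domination it suffices to show $\nu_\pi(\mcal{I}) \leq \mu_\pi(\mcal{I})$. Since $\Gamma$, hence $\Gamma^2$, is countable (as noted before Theorem \ref{Phi_coupling}) and all summands are nonnegative, every rearrangement of sums below is legitimate.

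First I would express the $\nu_\pi$-mass of $\mcal{I}$ through the first marginal of $\Phi_\pi$. Because $\Phi_\pi$ is a monotone coupling of $\nu_\pi,\mu_\pi$ (Theorem \ref{Phi_coupling}, Definition \ref{coupling_def}), we have $\Phi_\pi(h,g) = 0$ whenever $h\not\preceq_\Gamma g$, and $\sum_{g\succeq_\Gamma h}\Phi_\pi(h,g) = \nu_\pi(h)$, so
\[ \nu_\pi(\mcal{I}) = \sum_{h\in\mcal{I}}\ \sum_{g\succeq_\Gamma h}\Phi_\pi(h,g). \]
Next I would invoke the defining property of an upper set: $h\in\mcal{I}$ and $g\succeq_\Gamma h$ force $g\in\mcal{I}$. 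Hence every pair $(h,g)$ contributing to the double sum lies in $\mcal{I}\times\mcal{I}$, and dropping the constraints $g\succeq_\Gamma h$ and $h\in\mcal{I}$ (only nonnegative terms are added) together with $\sum_{h\preceq_\Gamma g}\Phi_\pi(h,g) = \mu_\pi(g)$ yields
\[ \nu_\pi(\mcal{I}) = \sum_{h\in\mcal{I}}\ \sum_{\substack{g\in\mcal{I}\\ g\succeq_\Gamma h}}\Phi_\pi(h,g) \ \leq\ \sum_{g\in\mcal{I}}\ \sum_{h\in\Gamma}\Phi_\pi(h,g) \ =\ \mu_\pi(\mcal{I}). \]
Since $\mcal{I}$ was arbitrary, $\mu_\pi\succeq\nu_\pi$.

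An equivalent and even shorter route is to apply the previous Corollary (equation \eqref{eq:thm_gap}) to $Z = \chi_\mcal{I}$: the indicator of an upper set is increasing in $\Gamma$ and $\mbb{Z}_+$-valued, with $Z_\tau = \mcal{I}$ for $\tau=0$ and $Z_\tau = \emptyset$ for $\tau\geq 1$, so \eqref{eq:thm_gap} collapses to $\mu_\pi(\mcal{I}) - \nu_\pi(\mcal{I}) = \Phi_\pi(\mcal{I}^c,\mcal{I}) \geq 0$, which additionally identifies the size of the gap. I do not expect any genuine obstacle here: the only points needing a line of justification are that $\chi_\mcal{I}$ is increasing (immediate from the definition of upper set) and that the manipulations above are valid on the countable space $\Gamma$, both of which are already in place. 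The real content was front-loaded into Theorem \ref{Phi_coupling}, so this is essentially a one-paragraph consequence.
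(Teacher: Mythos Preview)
Your proposal is correct, and your ``shorter route'' via $Z=\chi_{\mcal{I}}$ is exactly the paper's proof: it notes that $\chi_{\mcal{I}}$ is increasing and applies \eqref{eq:thm_gap} to obtain $\mu_\pi(\mcal{I}) - \nu_\pi(\mcal{I}) = \Phi_\pi(\mcal{I}^c,\mcal{I}) \geq 0$. Your first, direct marginal computation is also fine and is essentially the content of Proposition \ref{incr_prop} specialized to indicators, so there is no substantive difference in approach.
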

\begin{proof}
	For any upper set $\mcal{I} \subset \Gamma$, $\chi_{\mcal{I}}(\cdot)$ is increasing in $\Gamma$. By \eqref{eq:thm_gap}, 
	\begin{align}
		\mu_\pi(\mcal{I}) - \nu_\pi(\mcal{I}) &= \mbb{E}_{\mu_\pi}(\chi_{\mcal{I}}) - \mbb{E}_{\nu_\pi}(\chi_{\mcal{I}}) \nonumber \\
		&= \Phi_\pi(\mcal{I}^c,\mcal{I}) \geq 0 \label{eq:upper_difference} 
	\end{align}
\end{proof}

The difference in probability \eqref{eq:upper_difference} confirms the intuition gained from Corollary \ref{MFA_corollary} that sample paths with consistently high numbers of infected individuals are more probable under the benchmark chain. The closed-form differences \eqref{eq:thm_gap} and \eqref{eq:upper_difference} provide a stochastic analogue to Corollary \ref{MFA_corollary}, which only establishes inequality between the mean-field metastable states of benchmark and distancing models. 

Some examples of increasing $\mbb{Z}_+$-valued random variables in $\Gamma$ are
\begin{itemize}
	\item The absorption time $T:\Gamma \rightarrow \mbb{Z}_+$, defined by \eqref{eq:T}.
	\item The social cost up to time $m$, defined by $g \mapsto \sum_{t=0}^m |g^t|$, where $|s| \triangleq \sum_{i\in\mcal{N}} s_i$ for $s\in\Omega$.
	\item The ``epidemic spread", or how many unique nodes that contract the disease in a given amount of time $m$. This is given by $g \mapsto \sum_{i\in\mcal{N}}\chi_{E_i}(g)$, where $E_i = \{g :  \sum_{t=0}^m g_i^t > 0\}$. This metric is investigated on different network structures in the next section.
\end{itemize}
\begin{figure*}[t]
	\centering
	\begin{subfigure}{.32\textwidth}
		\centering
		\includegraphics[scale=.32]{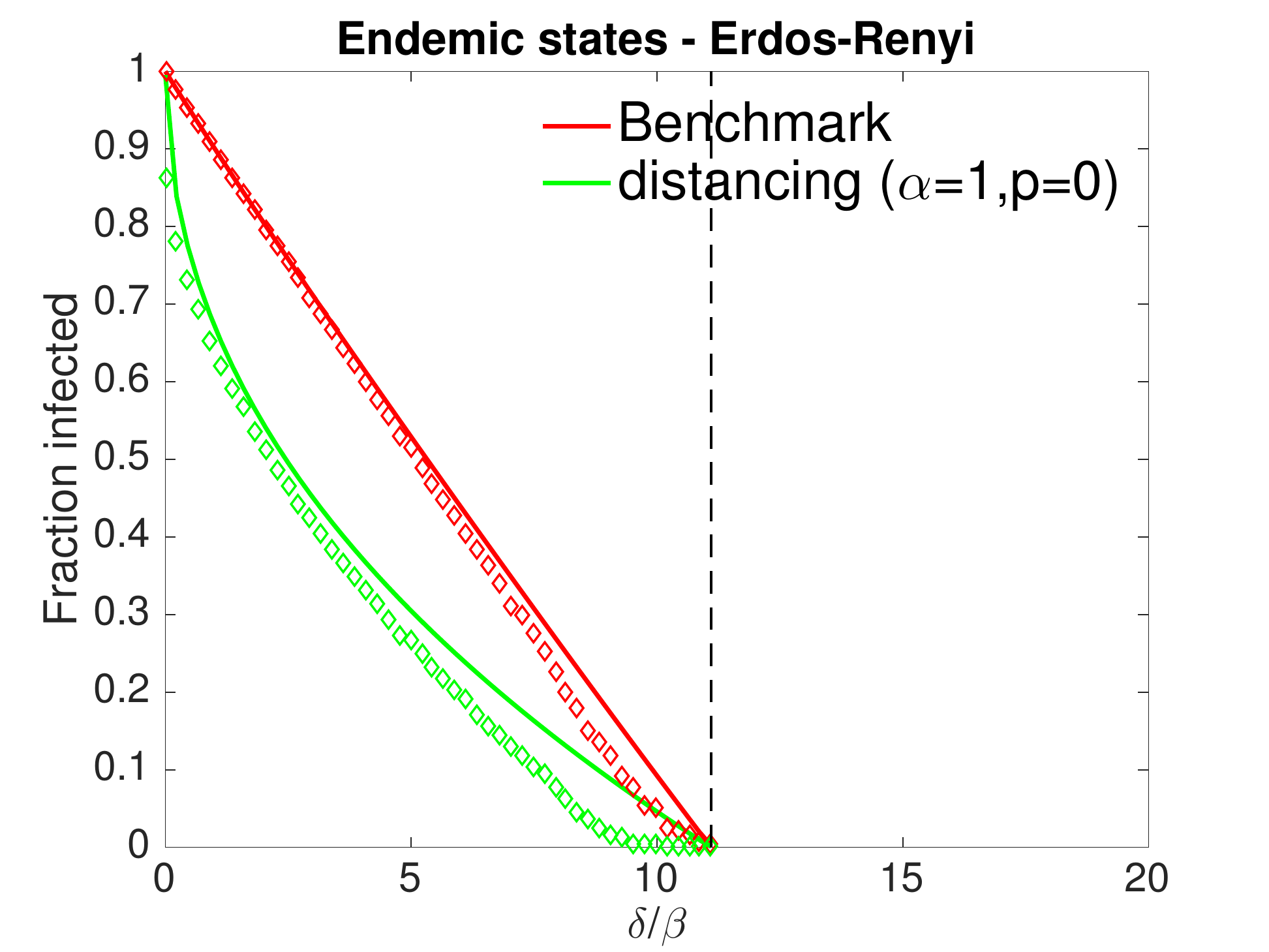}
		\caption{}
		\label{fig:FP_ER}
	\end{subfigure}  
	\begin{subfigure}{.32\textwidth}
		\centering
		\includegraphics[scale=.32]{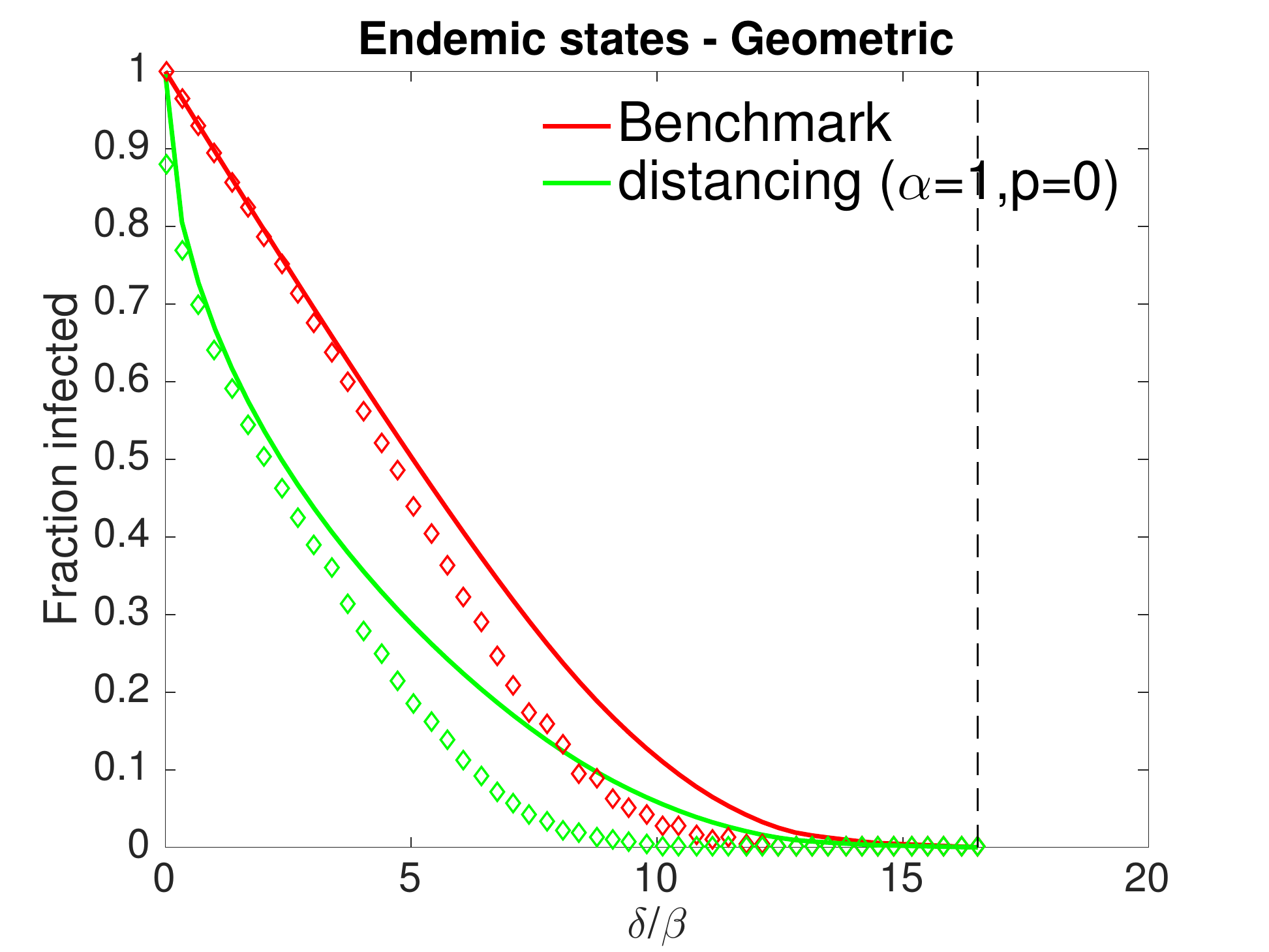}
		\caption{}
		\label{fig:FP_geo}
	\end{subfigure}
	\begin{subfigure}{.32\textwidth}
		\centering
		\includegraphics[scale=.32]{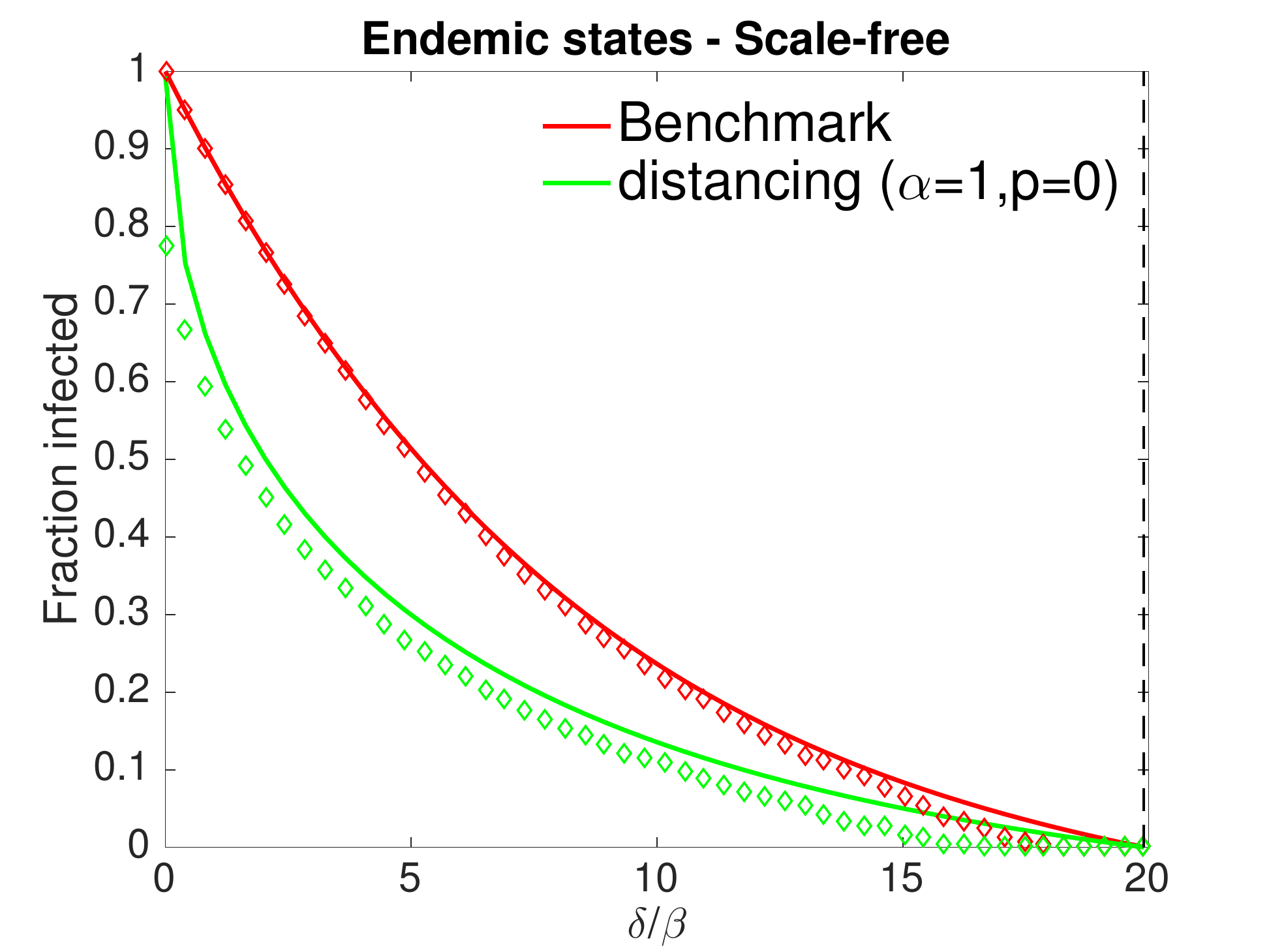}
		\caption{}
		\label{fig:FP_SF}
	\end{subfigure}
	\caption{\small Norms of the nontrivial fixed points (solid lines) and long-run fraction of infected in stochastic simulations (diamonds) in the range of epidemic persistence, $\delta/\beta \in [0,\lambda_{\text{max}}(A_C)]$, for $n=1000$ node networks. The fixed points are computed by iterating the MFA dynamics \eqref{eq:MFA_dist} and \eqref{eq:MFA_standard} with an arbitrary initial condition until convergence. The stochastic long-run infected fractions are computed by averaging the levels of epidemic states in the latter half of a sample run of length 200. Vertical dashed lines indicate $\lambda_{\text{max}}(A_C)$. (a) Erd\H{o}s-Renyi random network with $p_{\text{ER}}=.01$, $\lambda_{\text{max}}(A_C)=11.1$. Here, $p_{\text{ER}} > \log n/n$, the regime where the network is connected with high probability. (b) Geometric random graph with $r = .0564$, $\lambda_{\text{max}}(A_C)=16.52$. (c) Scale-free generated from the PA algorithm with $m=5$, $\lambda_{\text{max}}(A_C)=19.9$. The parameters are chosen such that all networks have the same average degree $d \approx 10$.} 
	\label{fig:FP_plots}
\end{figure*}
\begin{figure*}[t]
	\centering
	\begin{subfigure}{.32\textwidth}
		\centering
		\includegraphics[scale=.32]{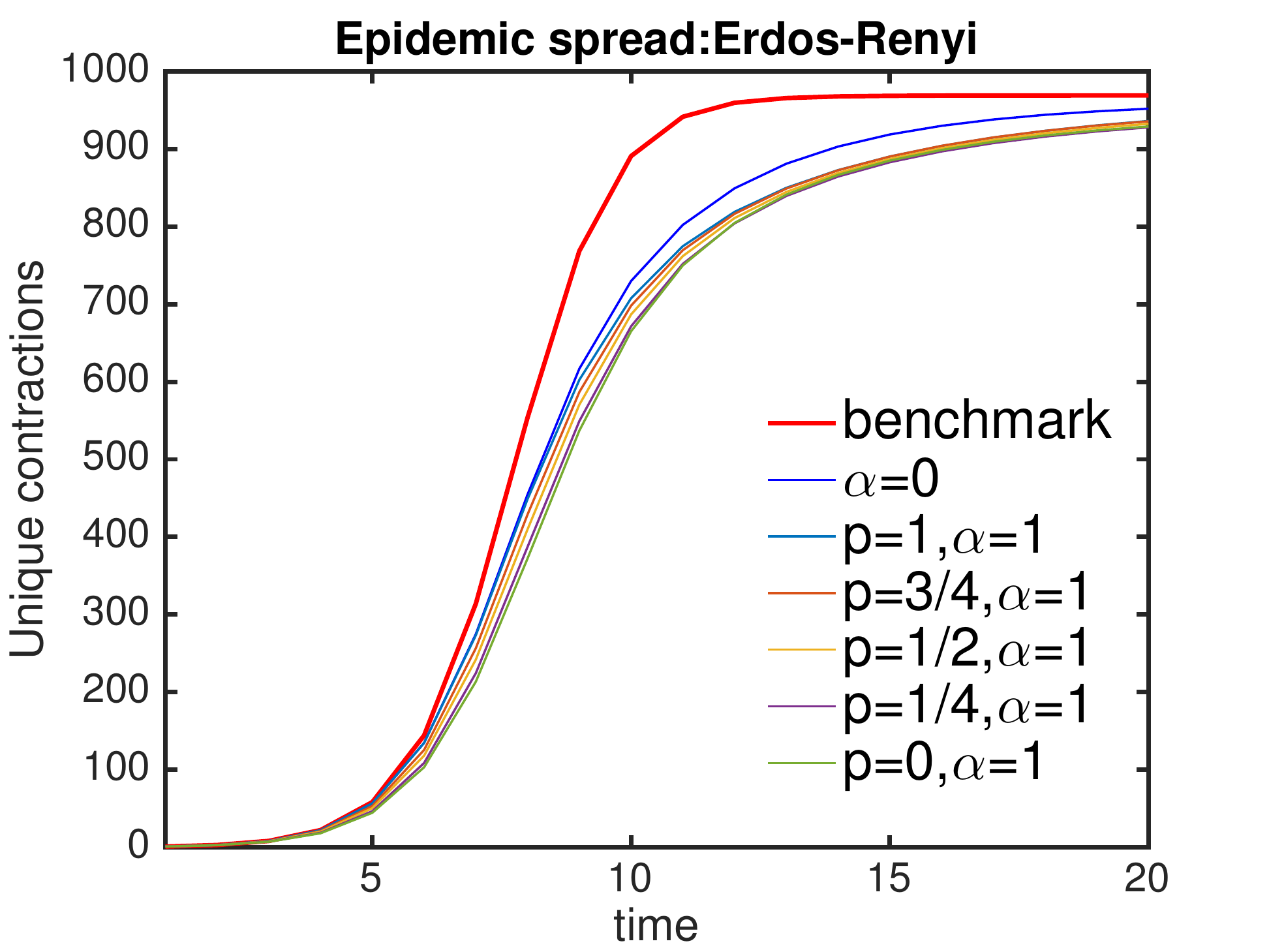}
		\caption{}
		\label{fig:ER_TS}
	\end{subfigure}  
	\begin{subfigure}{.32\textwidth}
		\centering
		\includegraphics[scale=.32]{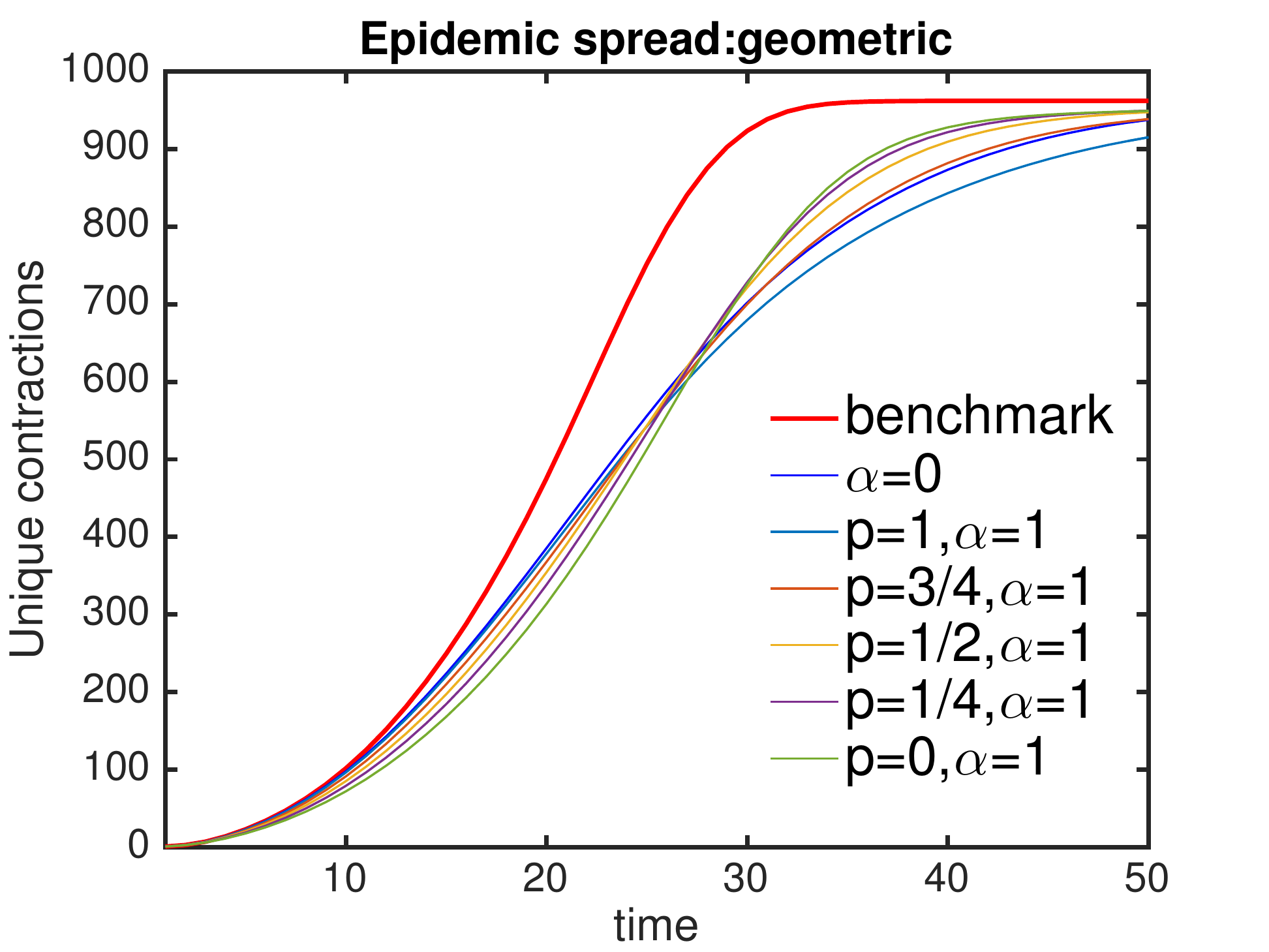}
		\caption{}
		\label{fig:geo_TS}
	\end{subfigure}
	\begin{subfigure}{.32\textwidth}
		\centering
		\includegraphics[scale=.32]{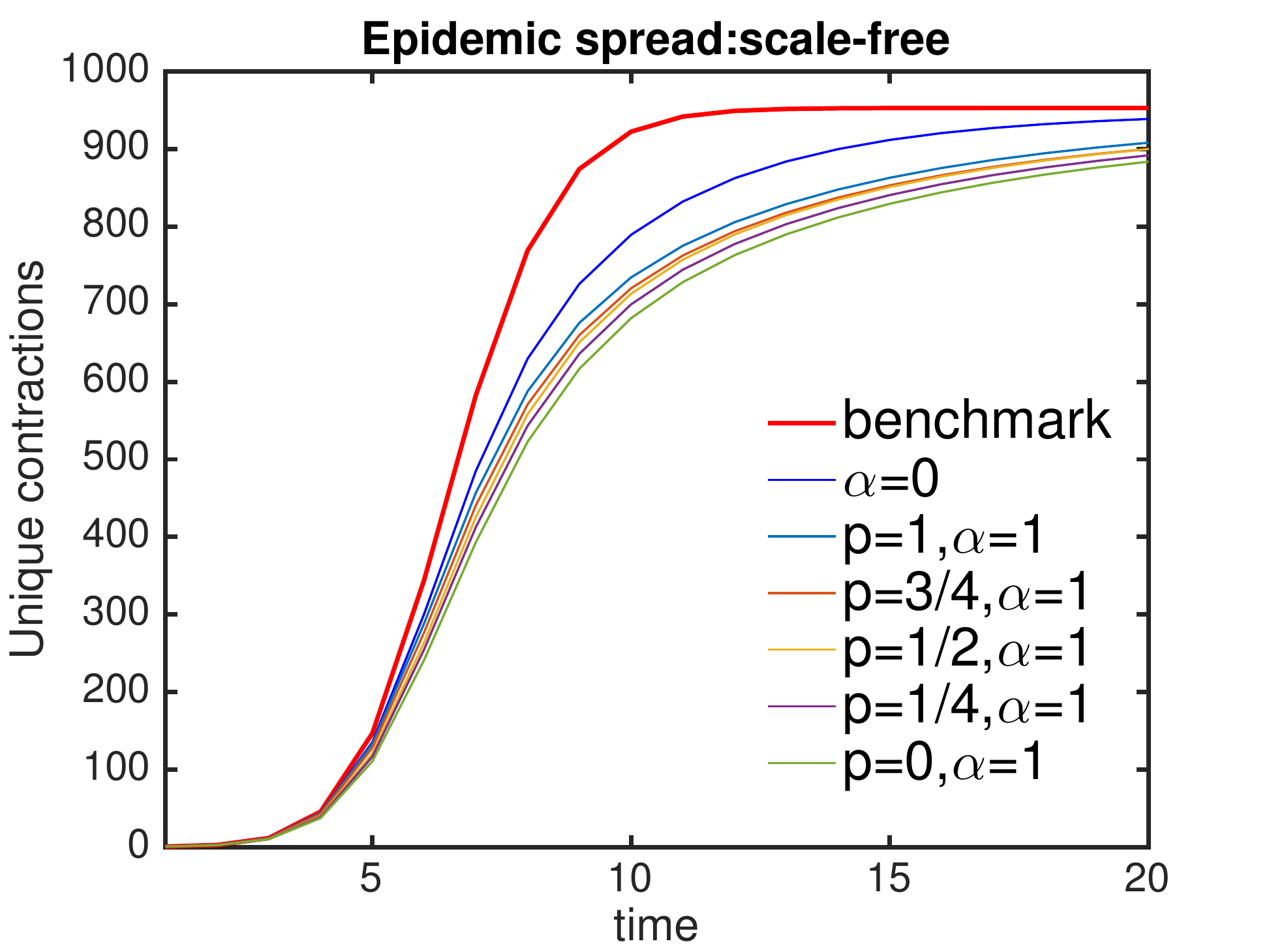}
		\caption{}
		\label{fig:SF_TS}
	\end{subfigure}
	\caption{\small Epidemic spreading as a function of time (same networks as Fig. \ref{fig:FP_plots}). Local contact information ($\alpha$ near 1, $p$ near 0) slows spread most effectively for (a),(c), and the early stages of (b), whereas global information ($\alpha$ near 0 or $\alpha=1,p$ near 1) is least effective. Note the inversion of awareness effectiveness in (b). In these simulations, $\delta = \beta = 0.2$. }
	\label{fig:spread_plots}
\end{figure*}
%
\section{Simulations on random networks}\label{sec:sims}

In this section, we illustrate through numerical simulations how the structure of the contact network $\mcal{G}_C$ affects the course of an epidemic in the presence of awareness and social distancing. Extensive analytical and simulation studies have been conducted without awareness (\cite{Chak2003},\cite{Ganesh2005},\cite{scalefree_threshold},\cite{VM2009},\cite{Vespignani_2001}). Here, we look at three random graph families - geometric, Erd\H{o}s-Renyi, and scale-free. These networks are relevant in studying epidemic spreading because they exhibit a variety of qualitative features that reflect real-world networks. Geometric networks portray people connected by geographic distance. Erd\H{o}s-Renyi random networks display a small-world effect common in many real world networks - e.g neural and social influence networks. Online social networks and the World Wide Web are examples of scale-free networks \cite{Newman}. 

In our model, the social network $\mcal{G}_I$ is generated directly from $\mcal{G}_C$ via a parameter $p\in(0,1)$ through the following procedure: 1) Select a fraction $p$ of existing edges in $\mcal{E}_C$ at random and remove them from the edge set; 2) For each of the selected edges, select one of the two end nodes randomly (e.g with probability 1/2) as the root node; 3) For each of the selected root nodes $i$, select $j\neq i$ uniformly at random and add the edge $(i,j)$. For $p$ close to one, the resulting graph $\mcal{G}_I = (\mcal{N},\mcal{E}_I)$ exhibits the small-world effect (small average shortest path length and small clustering) \cite{watts1998}. When $p=0$, $\mcal{G}_I = \mcal{G}_C$.

For the contact networks, geometric random graphs are generated by placing $n$ points uniformly at random on the unit torus (unit square with periodic boundary conditions). An edge exists between any two points if they are less than a specified distance $r\in(0,1)$ away. Erd\H{o}s-Renyi random graphs are constructed by forming an edge between any two nodes independently with a fixed probability $p_{\text{ER}}\in(0,1)$. Scale-free networks are generated by the preferential attachment algorithm \cite{Barabasi_1999}: starting with an initial connected graph of $m_0 \geq m$ nodes, $n-m_0$ additional nodes are added sequentially with each incoming node establishing links to $m$ existing nodes in the network. The probability a node receives an incoming link is proportional to its degree. We performed simulation analysis on one network from each random graph family. The networks all have 1000 nodes with an average degree of 10, and hence the same number of edges.

In Figure \ref{fig:FP_plots}, the normalized non-trivial fixed points are characterized for the three random networks in the interval of epidemic persistence. The norms of these points indicate the size of the endemic states and they slightly overestimate the actual long-run infected fraction observed in stochastic simulations of the Markov chains.

In Figure \ref{fig:spread_plots}, we quantify the ``epidemic spread" by the number of unique nodes that contract an infection as time progresses when one uniformly random node is initially infected. This metric is an example of an increasing random variable over sample paths (Section \ref{sec:coupling}) and is helpful in revealing not only how fast an epidemic initially spreads in the network, but also how far-reaching it is. A key observation is that contact awareness ($p=0,\alpha=1$) slows epidemic spreading better than any other awareness configuration at the beginning of an epidemic. This is intuitively clear since contact awareness provides nodes with the most vital information if they are in danger of getting infected. As $p$ increases, $\mcal{G}_I$ deviates more from $\mcal{G}_C$ and the information nodes receive become less vital.

Erd\H{o}s-Renyi and scale-free (with $m=5$) networks admit disease spread throughout the entire network in a short amount of time, even with social distancing (Figure \ref{fig:ER_TS},\ref{fig:SF_TS}). This is attributed to small average shortest path lengths (Ch. 8 \& 12, \cite{Newman}), allowing the epidemic to quickly spread to other parts of the network. Random geometric networks are characterized by high clustering and large diameter. Clustering slows the spread of an epidemic (Figure \ref{fig:geo_TS}), but also contributes to increasing the final epidemic size \cite{Volz_2011}. The virus stays localized and spreads slowly. This explains the inversion of awareness parameters in Figure \ref{fig:geo_TS}. By the time the epidemic first reaches its endemic level around $t=20$, many nodes have not yet been exposed because at this point their local communities are untouched. Thus, having global or long-range social awareness (low $\alpha$ or high $p$) is more beneficial over contact awareness.


\section{Conclusions}\label{sec:conclusion}

We modified the benchmark networked SIS epidemic process to include agent awareness, where prevalence-based information comes from social contacts and a global broadcast of the overall infected fraction. Agents take social distancing actions based on the level of information received, which reduces their probabilities of getting infected. We showed that awareness does not change the epidemic threshold for persistence by proving existence of a nontrivial fixed point in the mean-field approximation. Any nontrivial fixed point of the distancing model is strictly component-wise less than the unique nontrivial fixed point of the benchmark model.

We provided a full stochastic comparison analysis between the benchmark and distancing chains in terms of their respective probability distributions on sample paths by constructing a monotone coupling. The construction relies on exploiting the differences in node transition probabilities between the two chains. Consequently, adding awareness reduces the expectation of any increasing random variable on sample paths and we obtain a closed form expression for the reduction. This implies the benchmark distribution on sample paths stochastically dominates the distancing distribution.

In simulations, we showed epidemic spreading heavily depends on the network structure. In particular, qualitative features such as small-world effects, clustering, and diameter explain the results seen in simulations. We also concluded local contact awareness is the most effective at slowing epidemic spread, and global awareness is the least effective.
 
\appendix

\begin{proof}[Proof of Theorem \ref{Phi_coupling}]

When $x\preceq_\Omega y$, the $\varphi^{x,y}_i$ are well-defined probabilities since $p_{01}^i(y) - p_{01,d}^i(x) \geq 0$ and $1 - p_{01,d}^i(x) - \delta(1 - p_{01}^i(y)) > 0$. One can see by inspection that $\varphi^{x,y}_i$ is a monotone coupling of $\mbb{P}_i^d(x,\cdot)$ and $\mbb{P}_i(y,\cdot)$ defined in \eqref{eq:distancing_measure}, \eqref{eq:distancing_measure2} and \eqref{eq:bench_measure}, \eqref{eq:bench_measure2} respectively.

Observe from \eqref{eq:phi_omega}, $\varphi^{x,y}(\omega,\omega') > 0$ implies $x\preceq_\Omega y$ and $\omega \preceq_\Omega \omega'$. Consequently, $\varphi^{x,y}$ is a monotone coupling of $K_d(x,\cdot),K(y,\cdot)$:
\begin{align}
	\sum_{\omega'\succeq_\Omega \omega} \varphi^{x,y}(\omega,\omega') &= \prod_{i=1}^n \left(\sum_{\omega'_i\geq\omega_i} \varphi^{x,y}_i(\omega_i,\omega'_i) \right) \\
	&= \prod_{i=1}^n \mbb{P}_i^d(x,\omega_i) \\
	&= K_d(x,\omega)
\end{align}
By a completely analogous computation, we obtain $\sum_{\omega'\preceq_\Omega\omega} \varphi^{x,y}(\omega',\omega) = K(y,\omega)$. 

Also, notice from \eqref{eq:phi_pi} that $\Phi_\pi(h,g) > 0$ implies $h^0 = g^0$ and $h \preceq_\Gamma g$. Consequently, $\Phi_\pi$ is a monotone coupling of $\nu_\pi,\mu_\pi$:
\begin{align}
	\sum_{g\succeq_\Gamma h} \Phi_\pi(h,g) &= \sum_{\substack{g\succeq_\Gamma h \\ g^0 = h^0 }} \pi(h^0)\prod_{t=0}^{T(g)-1} \varphi^{h^t,g^t}(h^{t+1},g^{t+1}) \label{from_def}  \\
	&= \pi(h^0)\prod_{t=1}^{T(h)} \sum_{g^t \succeq_\Omega h^t} \varphi^{h^{t-1},g^{t-1}}(h^t,g^t) \label{eq:combinatorial_gamma} \\
	&= \pi(h^0)\prod_{t=1}^{T(h)} K_d(h^{t-1},h^t) \\
	&= \nu_\pi(h)
\end{align}
The equality \eqref{eq:combinatorial_gamma} is the combinatorial form of writing \eqref{from_def}, and the product terminates at $T(h)$ because 1) $g\succeq_\Gamma h$ implies $T(g) \geq T(h)$, 2) $h^t = \bold{o}$ for all $t \geq T(h)$ and 3) for any $t > T(h)$,
\begin{align*}
	\sum_{g^t \succeq_\Omega h^t} \varphi^{h^{t-1},g^{t-1}}(h^t,g^t) &= \sum_{g^t\in\Omega} \varphi^{\bold{o},g^{t-1}}(\bold{o},g^t) = K_d(\bold{o},\bold{o}) \\
	&=1
\end{align*}
By an analogous computation, $\sum_{h\preceq_\Gamma g} \Phi_\pi(h,g) = \mu_\pi(g)$. 
\end{proof}
\section*{Acknowledgements}

This work is supported by ARO grant \#W911NF-14-1-0402, and supported in part by KAUST. The authors thank J. Walker Gussler (Georgia Inst. Tech.) for his contribution in the simulations.






\bibliographystyle{IEEEtran}
\bibliography{sources}

\end{document}